\newcommand{\bI}{{\bf I}}
\newcommand{\bV}{{\bf V}}
\newcommand{\dom}[1]{\text{dom}(#1)}
\newcommand{\vars}[1]{\text{vars}(#1)}
\newcommand{\AND}{\text{AND}}
\newcommand{\UNION}{\text{UNION}}
\newcommand{\OPT}{\text{OPT}}
\newcommand{\PSPACE}{\text{PSPACE}}
\newcommand{\ptime}{\text{PTIME}}
\newcommand{\coNP}{\text{coNP}}
\newcommand{\Wone}{\text{W[1]}}
\newcommand{\wdeval}{\text{{\sc wdEVAL}}}
\newcommand{\emb}{\text{{\sc EMB}}}
\newcommand{\pat}[1]{\text{pat}(#1)}
\newcommand{\ctw}[1]{\text{ctw}(#1)}
\newcommand{\tw}[1]{\text{tw}(#1)}
\newcommand{\dw}[1]{\text{dw}(#1)}
\newcommand{\bw}[1]{\text{bw}(#1)}
\newcommand{\tsupp}{\text{sp}}
\newcommand{\tbr}{\text{br}}
\newcommand{\supp}[1]{\text{supp}(#1)}
\newcommand{\C}{{\cal C}} 
\newcommand{\V}{{\cal V}} 
\newcommand{\B}{{\cal B}} 
\newcommand{\G}{{\cal G}} 
\newcommand{\HH}{{\cal H}} 
\newcommand{\ES}{{\cal S}} 
\newcommand{\T}{{\cal T}} 
\newcommand{\F}{{\cal F}} 
\newcommand{\PP}{{\cal P}} 
\newcommand{\CA}[1]{{\bf CA}(#1)} 
\newcommand{\VCA}[1]{{\bf VCA}(#1)} 
\newcommand{\GtG}[1]{{\bf GtG}(#1)} 
\newcommand{\wdpf}{{\bf wdpf}} 
\newcommand{\sem}[2]{\llbracket #1 \rrbracket_{#2}}
\newtheorem{example}{Example}
\newtheorem{theorem}{Theorem} 
\newtheorem{proposition}{Proposition}
\newtheorem{corollary}{Corollary}  
\newtheorem{definition}{Definition}
\newtheorem{lemma}{Lemma}
\newcommand{\OMIT}[1]{}
\author{
Miguel Romero\\
\affaddr{University of Oxford, UK}\\
\email{miguel.romero@cs.ox.ac.uk}
}
\begin{document} 

\title{The tractability frontier of well-designed SPARQL queries\thanks{This project has received funding from the European
Research Council (ERC) under the European Union's Horizon 2020 research and
innovation programme (grant agreement No 714532). The paper reflects only the
authors' views and not the views of the ERC or the European Commission. The
European Union is not liable for any use that may be made of the information
contained therein.}}
\maketitle

\begin{abstract}
We study the complexity of query evaluation of SPARQL queries. We 
focus on the fundamental fragment of well-designed SPARQL restricted to the AND, OPTIONAL and UNION operators. 
Our main result is a structural characterisation of the classes of well-designed queries that can be evaluated in polynomial time. 
In particular, we introduce a new notion of width called \emph{domination width}, which relies on 
the well-known notion of treewidth. We show that, under some complexity theoretic assumptions, 
the classes of well-designed queries that can be evaluated in polynomial time are precisely those of bounded domination width.   

\end{abstract}

\section{Introduction}
\label{sec:intro}

The \emph{Resource Description Framework} (RDF) \cite{RDF} is the W3C standard for representing linked data on the Web. 
In this model, data is represented as \emph{RDF graphs}, which consist of collections of triples of \emph{internationalised resource identifiers} (IRIs). 
Intuitively, such a triple $(s,p,o)$ represents the fact that a \emph{subject} $s$ is connected to an \emph{object} $o$ via a \emph{predicate} $p$.

SPARQL \cite{SPARQL} is the standard query language for RDF graphs. 
In a seminal paper, P\'erez et al. \cite{PAG09} (see also \cite{PAG06}) gave a clean formalisation of the language, which laid the foundations for its theoretical study. 
Since then, a lot of work has been done in different aspects of the language such as query evaluation \cite{LM13,ACP12,KRU15,BPS15,KPS16}, optimisation \cite{lete,PS14,KRRV15}, 
and expressive power \cite{AG08,PW13,KRU15,ZB15,KK16}, to name a few. 

As shown in \cite{PAG09}, it is PSPACE-complete to evaluate SPARQL queries. 
This motivated the introduction of a natural fragment of SPARQL called the \emph{well-designed} fragment, whose evaluation problem 
is coNP-complete \cite{PAG09}. 
More formally, the \emph{evaluation problem} $\wdeval$ for well-designed SPARQL is to decide, given a well-designed query $P$, 
and RDF graph $G$ and a mapping $\mu$, whether $\mu$ belongs to the answer $\sem{P}{G}$ of $P$ over $G$. 
By now the well-designed fragment is central in the study of SPARQL 
and a lot of efforts has been done by the theory community to understand fundamental aspects of this fragment (see e.g. \cite{PAG09,lete,PS14,BPS15,KRRV15,KK16,KPS16,KRU15}). 
In this paper, we focus on the core fragment of well-designed SPARQL restricted to the AND, OPTIONAL and UNION operators, as defined in \cite{PAG09}.

Despite its importance, several basic questions remain open for well-designed SPARQL. 
As first observed in \cite{lete}, while the problem $\wdeval$ is coNP-complete, it becomes \emph{tractable}, i.e. polynomial-time solvable, 
for restricted classes of well-designed queries. Indeed, it was shown that $\wdeval$ is in PTIME for every class $\C$ of queries satisfying a certain \emph{local tractability} condition \cite{PAG09}. 
We emphasise that the above-mentioned result is briefly discussed in \cite{PAG09} as the focus of the authors is on the static analysis and optimisation of queries rather than complexity of evaluation. 
Subsequent works \cite{BPS15,KPS16} have studied the complexity of evaluation in more depth but the focus has been mainly on the fragment of SPARQL including the SELECT operator (i.e., \emph{projection}). 
In particular, the following fundamental question regarding the core well-designed fragment remains  open:
\emph{which classes $\C$ of well-designed SPARQL can be evaluated in polynomial time?}

Our main contribution is a complete answer to the question posed above. In particular, 
we  introduce a new width measure for well-designed queries called \emph{domination width}, which is based on the well-known notion of \emph{treewidth} (see Section \ref{sec:tractability} for precise definitions). 
For a class $\C$ of well-designed queries, let us denote by $\wdeval(\C)$ the evaluation problem $\wdeval$ restricted to the class $\C$. 
Also, we say that a class $\C$ of well-designed queries has \emph{bounded} domination width if there is an universal constant $k\geq 1$ such that the domination width of every query in $\C$ is at most $k$. 
Then, our main technical result is as follows (Theorem \ref{theo:main}). 
\emph{Assume that FPT $\neq$ W[1]. Then, for every recursively enumerable class $\C$ of 
well-designed queries, the problem $\wdeval(\C)$ is in PTIME if and only if 
$\C$ has bounded domination width}. 
The assumption FPT $\neq$ W[1] is a widely believed assumption from parameterised complexity (see Section \ref{sec:hardness} for precise definitions). 
As we observe in Section \ref{sec:tractability}, one can remove the assumption of $\C$ being recursively enumerable by considering a stronger assumption than FPT $\neq$ W[1]
considering non-uniform complexity classes. 

Our result builds on the classical result by Dalmau et al. \cite{DKV} and Grohe \cite{gro07} showing that a recursively enumerable class of \emph{conjunctive queries} (CQs) over schemas of \emph{bounded arity} is tractable if and only if the \emph{cores} of the CQs in $\C$ have bounded treewidth. 
(Recall that a CQ is a first-order query using only 
conjunctions and existential quantification.)

For the tractability part of our result, we exploit, as in \cite{DKV}, the so-called \emph{existential pebble game} introduced in \cite{KV95} (see also \cite{DKV}). 
This game provides a polynomial-time relaxation for the problem of checking the existence of homomorphisms, 
which is a well-known NP-complete problem (see e.g. \cite{CM77}). 
Using the existential pebble game, we define a natural relaxation of the standard algorithm from \cite{lete} (see also \cite{PS14}) for evaluating well-designed queries. 
Then we show that this relaxation correctly solves instances of bounded domination width (Theorem \ref{theo:main-tractab}). 

For the hardness part, we follow a similar strategy as in \cite{gro07}. The two main ingredients in our proof is 
an adaptation of the main construction of \cite{gro07} to handle \emph{distinguished elements} or \emph{constants} (Lemma \ref{lemma:groheB}) and 
an elementary property of well-designed queries of large domination width (Lemma \ref{lemma:large-dw}). 

Finally, we emphasise that our classes of bounded domination width significantly extend the classes that are locally tractable \cite{lete}, which, as we mentioned above, 
are the most general tractable restrictions known so far.  
This is even true in the case of UNION-free well-designed queries. 
As we discuss in Section \ref{sec:union-free}, the notion of domination width for UNION-free queries can be simplified and coincides with a width measure called \emph{branch treewidth}. 
Bounding this simpler width measure still \emph{strictly} generalises local tractability. 

\medskip
\noindent
{\bf Organisation.} We present  the basic definitions in Section \ref{sec:prel}. In 
Section \ref{sec:tractability}, we introduce the measure of domination width and present our main tractability result. 
The main hardness result is presented in Section \ref{sec:hardness}. 
We conclude with some final remarks in Section \ref{sec:conclusions}. 

\section{Preliminaries}
\label{sec:prel}

\medskip
\noindent
{\bf RDF Graphs.} Let $\bI$ be a countable infinite set of IRIs. 
An \emph{RDF triple} is a tuple in $\bI\times \bI\times \bI$ and 
an \emph{RDF graph} is a finite set of RDF triples. 
In this paper, we assume that no blank nodes appear in RDF graphs, i.e., we focus on \emph{ground} RDF graphs. 

\medskip
\noindent
{\bf SPARQL Syntax.} SPARQL \cite{SPARQL} is the standard query language for RDF. 
We rely on the formalisation proposed in \cite{PAG09}. 
We focus on the core fragment of the language given by the operators $\AND$, OPTIONAL ($\OPT$ for short), and $\UNION$.\footnote{Additional operators include FILTER and SELECT. We briefly discuss these operators in Section \ref{sec:conclusions}.} 
Let $\bV=\{?x,?y,\dots\}$ be a countable infinite set of \emph{variables}, disjoint from $\bI$. 
A SPARQL \emph{triple pattern} (or \emph{triple pattern} for short) is a tuple in $(\bI\cup\bV)\times(\bI\cup\bV)\times (\bI\cup\bV)$. 
The set of variables from $\bV$ appearing in a triple pattern $t$ is denoted by $\vars{t}$. 
Note that an RDF triple is simply a SPARQL triple pattern $t$ with $\vars{t}=\emptyset$. 
A SPARQL \emph{graph pattern} (or \emph{graph pattern} for short) is recursively defined as follows:
\begin{enumerate}
\item a triple pattern is a graph pattern, and 
\item if $P_1$ and $P_2$ are graph patterns, then $P_1\, *\,  P_2$ is also a graph pattern, for $*\in\{\AND,\OPT,\UNION\}$.
\end{enumerate}

\medskip
\noindent
{\bf SPARQL Semantics.} In order to define the semantics of graph patterns, 
we follow again the presentation in \cite{PAG09}. 
A \emph{mapping} $\mu$ is a partial function from $\bV$ to $\bI$. 
We denote by $\dom{\mu}$ the domain of the mapping $\mu$.
Two mappings $\mu_1$ and $\mu_2$ are \emph{compatible} if 
$\mu_1(?x)=\mu_2(?x)$, for all $?x\in \dom{\mu_1}\cap\dom{\mu_2}$. 
If $\mu_1$ and $\mu_2$ are compatible mappings then $\mu_1\cup\mu_2$ denotes the mapping 
with domain $\dom{\mu_1}\cup\dom{\mu_2}$ such that $\mu_1\cup\mu_2(?x)=\mu_1(?x)$, for all $?x\in \dom{\mu_1}$, 
and $\mu_1\cup\mu_2(?x)=\mu_2(?x)$, for all $?x\in \dom{\mu_2}$. 
For a triple pattern $t$ and a mapping $\mu$ such that $\vars{t}\subseteq \dom{\mu}$, 
we denote by $\mu(t)$ the RDF triple obtained from $t$ by replacing each $?x\in \vars{t}$ by $\mu(?x)$. 

For an RDF graph $G$ and a graph pattern $P$, the \emph{evaluation} $\sem{P}{G}$ 
of $P$ over $G$ is a set of mappings defined recursively as follows: 
\begin{enumerate}
\item $\sem{t}{G}=\{\mu\mid \text{$\dom{\mu}=\vars{t}$ and $\mu(t)\in G$}\}$, 
if $t$ is a triple pattern. 
\item $\sem{P_1\, \AND\, P_2}{G}=\{\mu_1\cup\mu_2$ $\mid \mu_1\in \sem{P_1}{G}$, $\mu_2\in \sem{P_2}{G}$ and $\mu_1, \mu_2$ are compatible$\}$. 
\item $\sem{P_1\, \OPT\,  P_2}{G}=\sem{P_1\, \AND\, P_2}{G}$ $\cup$ $\{\mu_1\mid $ $\mu_1\in \sem{P_1}{G}$ and there is no $\mu_2\in \sem{P_2}{G}$ compatible with $\mu_1\}$. 
\item $\sem{P_1\, \UNION\, P_2}{G}=\sem{P_1}{G}\cup\sem{P_2}{G}$.
\end{enumerate}

\medskip
\noindent
{\bf Well-designed SPARQL.} A central class of SPARQL graph patterns identified in \cite{PAG09}, and also the focus of this paper, is the class of 
well-designed graph patterns. 
We say that a graph pattern is \UNION-\emph{free} if it only uses the operators $\AND$ and $\OPT$. 
A \UNION-free graph pattern $P$ is \emph{well-designed} if for every 
subpattern $P'=(P_1\, \OPT\, P_2)$ of $P$, it is the case that every variable $?x$ ocurring in $P_2$ but not in $P_1$, 
does \emph{not} occur outside $P'$ in $P$. 
A SPARQL graph pattern $P$ is \emph{well-designed} if it is of the form $P=P_1\,\UNION\cdots\UNION\, P_m$, where 
each $P_i$ is a \UNION-free well-designed graph pattern.\footnote{This top-level use of the $\UNION$ operator is known as $\UNION$-normal form \cite{PAG09}. 
Note that we are implicitly using the fact that 
$\UNION$ is associative.}

\begin{example}
\label{ex:pattern}
Consider the following graph patterns:
\emph{
\begin{align*}
P_1& =((?x,p,?y)\, \OPT\, (?z,q,?x)) \\
&\qquad\qquad \, \OPT\, ((?y,r,?o_1)\, \AND\, (?o_1,r,?o_2)),\\
P_2& =((?x,p,?y)\, \OPT\, (?z,q,?x)) \\
&\qquad\qquad \, \OPT\, ((?y,r,?z)\, \AND\, (?z,r,?o_2)).
\end{align*}}
Note that $P_1$ is well-designed, while $P_2$ is not. 
Indeed, in the subpattern \emph{$P_2'=((?x,p,?y)\, \OPT\, (?z,q,?x))$} of $P_2$, the variable $?z$ appears in $(?z,q,?x)$ and not in $(?x,p,?y)$ but \emph{does} occur outside $P'_2$ in $P_2$.
\end{example}

Well-designed patterns have good properties in terms of query evaluation. 
More precisely, 
let $\wdeval$ be the problem of deciding, given a well-designed graph pattern $P$, an RDF graph $G$ and a mapping $\mu$, 
whether $\mu\in \sem{P}{G}$. 
It was shown in \cite{PAG09} that $\wdeval$ is $\coNP$-complete, 
while the problem is $\PSPACE$-complete for arbitrary SPARQL graph patterns.

\subsection{Pattern trees and pattern forests}

Besides alleviating the cost of evaluation, another key property of 
$\UNION$-free well-designed graph patterns is that they can be written in the so-called 
$\OPT$-normal form \cite{PAG09}. 
In turn, patterns in $\OPT$-normal form admit a natural tree representation, 
known as \emph{pattern trees} \cite{lete}. 
Intuitively, a pattern tree is a rooted tree where each node represents a well-designed pattern using only $\AND$ operators, while 
its tree structure  represents the nesting of $\OPT$ operators. 
Consequently, a well-designed graph pattern $P=$ $P_1\,$ $\UNION$ $\cdots$ $\UNION\,$ $P_m$ can be represented as 
a \emph{pattern forest}\footnote{In this paper, we work with a particular type of patterns trees/forests, namely \emph{well-designed pattern trees/forests}. For simplicity, sometimes we abuse notation and use the terms patterns trees/forests and well-designed pattern trees/forests interchangeably.}\cite{PS14}, i.e., a set of pattern trees $\{\T_1,\dots,\T_m\}$, where $\T_i$ is the pattern tree representation of $P_i$. 
Pattern trees/forests are useful for understanding how to evaluate and optimise well-designed patterns, 
and have been used extensively as a basic tool in the study of well-designed SPARQL (see e.g. \cite{lete,PS14,BPS15,KRRV15,KK16,KPS16}). 
As we show in this work, pattern forests are also fundamental to understand tractable evaluation of well-designed SPARQL: 
by imposing restrictions on the pattern forest representation, 
we can identify and \emph{characterise} the tractable classes of well-designed 
graph patterns. 

\medskip
\noindent
{\bf T-graphs and homomorphisms.} A \emph{triple pattern graph} (or \emph{t-graph} for short) is a finite set $S$ of triple patterns. 
We denote by $\vars{S}$ the set of variables from $\bV$ appearing in the t-graph $S$. 
Note that an RDF graph is simply a t-graph $S$ with $\vars{S}=\emptyset$. 
Let $t$ be a triple pattern and $h$ be a partial function from $\bV$ to $\bI\cup\bV$ such that $\vars{t}\subseteq\dom{h}$. 
We define $h(t)$ to be the triple pattern obtained from $t$ by replacing each $?x\in \vars{t}$ by $h(?x)$. 
For two t-graphs $S$ and $S'$, we say that a partial function $h$ from $\bV$ to $\bI\cup\bV$ is a $\emph{homomorphism}$ 
from $S$ to $S'$ if $\dom{h}=\vars{S}$ and for every $t\in S$, it is the case that $h(t)\in S'$. 

\medskip
\noindent
{\bf Basics of pattern trees and forests.} For an undirected graph $H$, we denote by $V(H)$ its set of nodes. 
A \emph{well-designed pattern tree} (or wdPT for short) 
is a triple $\T=(T,r,\lambda)$ such that 
\begin{enumerate}
\item $T$ is a tree rooted at a node $r\in V(T)$, 
\item $\lambda$ is a function that maps each node $n\in V(T)$ to a t-graph, and  
\item the set $\{n\in V(T)\mid?x\in \vars{\lambda(n)}\}$ induces a connected subgraph of $T$, 
for every $?x\in \bV$. 
\end{enumerate}

Let $\T=(T,r,\lambda)$ be a wdPT. 
A wdPT $\T'=(T',r',\lambda')$ is a \emph{subtree} of $\T$ if (i) $T'$ is a subtree of $T$, (ii) $r'=r$, 
and $\lambda'(n)=\lambda(n)$, for all $n\in V(T')$.  
Note that any subtree of $\T$ contains the original root $r$.
A \emph{child} of the subtree $\T'$ is a node $n\in V(T)\setminus V(T')$ such that $n'\in V(T')$, 
where $n'$ is the parent of $n$ in $T$.

For convenience, we fix two functions pat($\cdot$) and vars($\cdot$) as follows. 
Let $\T=(T,r,\lambda)$ be a wdPT. 
We define $\pat{n}:=\lambda(n)$, for every $n\in V(T)$ and $\pat{\T}:=\bigcup_{n\in V(T)} \pat{n}$. 
Note that $\pat{n}$ and $\pat{\T}$ are t-graphs. 
We let $\vars{n}:=\vars{\pat{n}}$, for $n\in V(T)$ and $\vars{\T}:=\vars{\pat{\T}}$.

A \emph{well-designed pattern forest} (wdPF for short) is a finite set $\F=\{\T_1,\dots,\T_m\}$ of well-designed pattern trees.

In \cite{lete}, it was shown that every wdPT can be translated efficiently into an 
equivalent wdPT in the so-called NR normal form. 
A wdPT $\T=(T,r,\lambda)$ is in \emph{NR normal form} if for every node $n\in V(T)$ 
with parent $n'$ in $T$, it holds that $\vars{n}\setminus \vars{n'}\neq \emptyset$. 
In this paper, we assume that all wdPTs are in NR normal form.

\medskip
\noindent
{\bf Well-designed SPARQL and wdPFs.} As in the case of SPARQL graph patterns, 
we denote by $\sem{\T}{G}$ (resp., $\sem{\F}{G}$) the evaluation of a wdPT $\T$ (resp., wdPF $\F$) over an RDF graph $G$. 
In \cite{lete}, for a wdPT $\T$, the set of mappings $\sem{\T}{G}$ 
is defined via a translation to well-designed graph patterns. 
However, if $\T$ is in NR-normal form, then $\sem{\T}{G}$ 
admits a simple characterisation stated in Lemma \ref{lemma:char-pt} below. 
In this paper, we adopt this characterisation as the semantics of wdPTs.

\begin{lemma}[\cite{lete,PS14}]
\label{lemma:char-pt}
Let $\T$ be a wdPT in NR normal form, $G$ an RDF graph and $\mu$ a mapping. 
Then $\mu\in \sem{\T}{G}$ iff there exists a subtree $\T'$ of $\T$ 
such that 
\begin{enumerate}
\item $\mu$ is a homomorphism from $pat(\T')$ to $G$. 
\item there is no child $n$ of $\T'$ and homomorphism $\nu$ from $pat(n)$ to $G$ 
compatible with $\mu$. 
\end{enumerate}
\end{lemma}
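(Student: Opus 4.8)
The plan is to prove the equivalence by induction on the structure of $\T$, relating the recursive $\AND$/$\OPT$ semantics (through which $\sem{\T}{G}$ is originally defined, via the translation of $\T$ into a well-designed graph pattern) to the combinatorial subtree-homomorphism condition. First I would settle the base case where $\T$ is a single node $r$. Here $\pat{\T}=\lambda(r)$ is a t-graph, the unique subtree is $\{r\}$ itself, it has no children, and condition~2 is vacuous; so the statement reduces to the assertion that $\mu\in\sem{\lambda(r)}{G}$ iff $\mu$ is a homomorphism from $\lambda(r)$ to $G$ with $\dom{\mu}=\vars{\lambda(r)}$. This is the standard correspondence between $\AND$-only patterns and homomorphisms, which I would verify by a short secondary induction on the number of triples in $\lambda(r)$, unfolding clause~1 (single triple) and clause~2 (AND as compatible union of mappings) of the semantics.

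For the inductive step I would write $\T$ as a root $r$ with branch subtrees $\T_1,\dots,\T_k$ hanging off the children $c_1,\dots,c_k$ of $r$, so that (reading each t-graph as the conjunction of its triples, and identifying each branch with its translation) $\T$ corresponds to $(\cdots(\lambda(r)\,\OPT\,\T_1)\,\OPT\,\cdots\,\OPT\,\T_k)$. I would peel the branches one at a time using the $\OPT$ clause: a mapping in $\sem{\T}{G}$ either extends into branch $i$ by a compatible member of $\sem{\T_i}{G}$ (to which I apply the induction hypothesis, obtaining a subtree of $\T_i$) or leaves branch $i$ untouched, which happens precisely when no member of $\sem{\T_i}{G}$ is compatible with it. Assembling the chosen subtrees of the taken branches together with $r$ yields the candidate $\T'$, and in the converse direction the same $\OPT$ clause reconstructs a witnessing mapping from a given $\T'$. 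The NR-normal-form and well-designedness assumptions (clause~3: variable occurrences induce connected subtrees) guarantee that the ``new'' variables contributed by distinct branches are pairwise disjoint and disjoint from $\dom{\mu}$, so all these compatible unions are well defined and independent.

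The main obstacle is reconciling the \emph{local} maximality of condition~2 --- which only forbids a compatible homomorphism from the immediate child label $\pat{n}=\lambda(n)$ --- with the \emph{global} ``no compatible extension'' demanded by the $\OPT$ semantics, which quantifies over all of $\sem{\T_i}{G}$, i.e. over the entire branch below $c_i$. The easy half (a compatible full-branch mapping restricts to a compatible homomorphism of $\lambda(c_i)$, by throwing away the bindings of the deeper nodes) handles the $(\Leftarrow)$ direction of the lemma. The hard half, needed for $(\Rightarrow)$, is to show that any compatible homomorphism $\nu$ of the child label $\lambda(c_i)$ can be \emph{greedily extended} down the branch to a maximal matchable subtree, producing an actual element of $\sem{\T_i}{G}$ still compatible with $\mu$; were such a $\nu$ to exist, the $\OPT$ semantics would have taken that branch, contradicting maximality of $\T'$. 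This is exactly where well-designedness is essential: by connectedness, every variable introduced strictly below $c_i$ but absent from $\lambda(c_i)$ occurs nowhere outside that branch, hence not in $\dom{\mu}$, so the fresh bindings created while extending never clash with $\mu$. Consequently the purely local test of condition~2 faithfully captures branch-level maximality, and the two characterisations coincide, closing the induction.
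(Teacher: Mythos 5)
The paper does not actually prove this lemma: it imports the characterisation from \cite{lete,PS14} and then \emph{adopts} it as the official semantics of wdPTs, so there is no in-paper argument to compare yours against. Judged on its own, your reconstruction is sound and matches the standard proof in those cited works: the structural induction peeling off the branches below the root, the reduction of the base case to the AND-only/homomorphism correspondence, and, crucially, the identification of the real content of the lemma --- that the purely local condition~2 (no compatible homomorphism from the immediate child label $\pat{n}$) is equivalent to the global OPT condition (no compatible member of $\sem{\T_i}{G}$ for the whole branch). Your two-step resolution is the right one: the easy direction because every element of $\sem{\T_i}{G}$ restricts to a homomorphism of the branch root's label, and the hard direction by greedily extending a compatible homomorphism of $\lambda(c_i)$ to a maximal matchable subtree of the branch, which the inductive hypothesis certifies as an element of $\sem{\T_i}{G}$. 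One small attribution point: the disjointness you need (fresh variables of distinct branches are pairwise disjoint and disjoint from $\dom{\mu}$, and a variable shared between a branch node and the already-fixed part must already occur in the branch's root label) comes from clause~(3) of the wdPT definition, i.e.\ connectedness of variable occurrences, which encodes well-designedness; the NR normal form is not what delivers it, though it is what makes the witnessing subtree $\T'$ unique, a fact the paper relies on elsewhere. With that caveat, the argument goes through.
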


For a wdPF $\F=\{\T_1,\dots,\T_m\}$ and an RDF graph $G$, we define $\sem{\F}{G}=\sem{\T_1}{G}\cup\cdots\cup\sem{\T_m}{G}$. 

\medskip
As shown in \cite{lete}, every \UNION-free well-designed graph pattern $P$
can be translated in polynomial time into an equivalent wdPT $\T$, i.e., 
a wdPT such that $\sem{\T}{G}=\sem{P}{G}$, for all RDF graphs $G$. 
Consequently and as observed in \cite{PS14}, every well-designed graph pattern $P$
can be translated in polynomial time into an equivalent wdPF $\F$. 
\emph{Throughout the paper, we fix a polynomial-time computable function ${\bf wdpf}$
that maps each well-designed graph pattern to an equivalent wdPF. }

\begin{example}
Recall $P_1$ from Example \ref{ex:pattern} and consider the following well-designed graph pattern:
\emph{
\begin{align*}
P& =P_1 \, \UNION\, \\
&\qquad\qquad((?x,p,?y)\, \OPT\, ((?z,q,?x)\, \AND\, (?w,q,?z))).
\end{align*}}
We have that $\wdpf(P)=\{\T_1,\T_2\}$, where $\T_1$ and $\T_2$ are the wdPTs depicted in Figure \ref{fig:main}, 
for $k=2$ and $K_2(?o_1,?o_2)=\{(?o_1,r,?o_2)\}$. 
\end{example}

\subsection{Restrictions of the evaluation problem}

Recall that $\wdeval$ denotes the problem of deciding, given a well-designed graph pattern $P$, an RDF graph $G$ and a mapping $\mu$, 
whether $\mu\in \sem{P}{G}$. 
In this paper, we study restrictions of $\wdeval$ given by different classes $\C$ of admissible patterns. 
Formally, for a class $\C$ of well-designed graph patterns, we define the problem $\wdeval(\C)$ as follows: 

\begin{center}
\begin{tabular}{|l|}
\hline
$\wdeval(\C)$\\
{\bf Input}: a well-designed graph pattern $P\in \C$,\\
an RDF graph $G$ and a mapping $\mu$. \\
{\bf Question}: does $\mu\in \sem{P}{G}$ hold?\\
\hline
\end{tabular}
\end{center}

Note that $\wdeval(\C)$ is a \emph{promise} problem, as 
we are given the promise that $P\in \C$. 
This allows us to analyse the complexity of evaluating patterns in $\C$ \emph{independently} of the cost of  
checking membership in $\C$.

\section{A new tractability condition} 
\label{sec:tractability}

In this section, we introduce the notion of domination width of a well-designed graph pattern and show our main tractability result: 
$\wdeval(\C)$ is in $\ptime$, for classes $\C$ of graph patterns of bounded domination width. 
Before doing so, we need to introduce some terminology.  

A \emph{generalised t-graph} is a pair $(S,X)$, where $S$ is a t-graph and $X\subseteq \vars{S}$. 
Consider two generalised t-graphs of the form $(S,X)$ and $(S',X)$. A \emph{homomorphism} from $(S,X)$ to $(S',X)$ 
is a homomorphism $h$ from $S$ to $S'$ such that $h(?x)=?x$, for all $?x\in X$. 
We write $(S,X)\rightarrow (S',X)$ whenever there is a homomorphism from $(S,X)$ to $(S',X)$; 
otherwise, we write $(S,X)\not\rightarrow (S',X)$. Note that the relation $\rightarrow$ is transitive, i.e., 
$(S,X)\rightarrow (S',X)$ and $(S',X)\rightarrow (S'',X)$ implies $(S,X)\rightarrow (S'',X)$.

Let $(S,X)$ be a generalised t-graph, $G$ be an RDF graph and $\mu$ be a mapping with $\dom{\mu}=X$. 
We write $(S,X)\rightarrow^\mu G$ if there is a homomorphism $h$ from $S$ to $G$ such that $h(?x)=\mu(?x)$, 
for all $?x\in X$. Notice that $\rightarrow$ composes with $\rightarrow^\mu$, i.e., 
$(S,X)\rightarrow (S',X)$ and $(S',X)\rightarrow^\mu G$ implies $(S,X)\rightarrow^\mu G$. 

Below we state several notions and properties for generalised t-graphs. 
We emphasise that all these properties are well-known for \emph{conjunctive queries} (CQs) and \emph{relational structures} 
and can be applied in our case as there is a strong correspondence between generalised t-graphs and CQs. 
Indeed, we can view a generalised t-graph $(S,X)$ as a CQ $q_{(S,X)}$ over a relational schema containing a single ternary relation, 
where the variables are $\vars{S}$, the \emph{free variables} are $X$, and the IRIs appearing in $S$ correspond to \emph{constants} in $q_{(S,X)}$. 
However, for convenience and consistency with RDF and SPARQL terminology, we shall work directly with  generalised t-graphs throughout the paper. 

\medskip
\noindent
{\bf Cores.} Let $(S,X)$ and $(S',X)$ be two generalised t-graphs. 
We say that $(S',X)$ is a \emph{subgraph} of $(S,X)$ if $S'\subseteq S$, and a \emph{proper} subgraph if $S'\subseteq S$ but $S\not\subseteq S'$. 
 A generalised t-graph $(S,X)$ is a \emph{core} if there is no homomorphism from $(S,X)$ to one of its proper subgraphs $(S',X)$. 
 We say that $(S',X)$ is a \emph{core} of $(S,X)$ if $(S',X)$ is a core itself, $(S,X)\rightarrow (S',X)$ and  $(S',X)\rightarrow (S,X)$. 
As stated below, every generalised t-graph $(S,X)$ has a unique core (up to renaming of variables), and hence, we can 
 speak of \emph{the} core of a generalised t-graph. 
  
\begin{proposition}[see e.g. \cite{AHV95,HN92}] 
\label{prop:core}
 Every  generalised t-graph $(S,X)$ has a unique core $(S',X)$ (up to renaming of variables).
 \end{proposition}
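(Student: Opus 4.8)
The plan is to prove Proposition \ref{prop:core} by establishing two facts: existence of \emph{a} core, and uniqueness up to renaming of variables. Since the excerpt explicitly notes that generalised t-graphs correspond to conjunctive queries over a single ternary relation with the elements of $X$ acting as free variables (constants that must be preserved), I would work with the standard core machinery for relational structures/CQs, transported to the t-graph setting via the relation $\rightarrow$ on generalised t-graphs with a fixed set $X$ of protected variables.

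For existence, I would argue as follows. Since $S$ is a finite t-graph, it has only finitely many subgraphs $(S',X)$ with $S'\subseteq S$. Among all subgraphs $(S',X)$ satisfying $(S,X)\rightarrow(S',X)$, pick one with $|S'|$ minimum; call it $(S^*,X)$. Taking the identity on $\vars{S}$ shows $(S,X)$ always maps to itself, so this set is nonempty and $(S^*,X)$ is well-defined. By minimality, $(S^*,X)$ admits no homomorphism to a proper subgraph of itself (such a homomorphism, composed with the map $(S,X)\rightarrow(S^*,X)$ and using transitivity of $\rightarrow$, would contradict minimality), so $(S^*,X)$ is a core. Moreover $(S,X)\rightarrow(S^*,X)$ by construction, and $(S^*,X)\rightarrow(S,X)$ since $(S^*,X)$ is a subgraph of $(S,X)$ (the inclusion gives the identity homomorphism, which fixes $X$). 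Hence $(S^*,X)$ is a core \emph{of} $(S,X)$ in the sense defined above.

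For uniqueness, suppose $(S_1,X)$ and $(S_2,X)$ are both cores of $(S,X)$. Then $(S_1,X)\rightarrow(S,X)\rightarrow(S_2,X)$ and symmetrically $(S_2,X)\rightarrow(S_1,X)$, using transitivity of $\rightarrow$; so the two cores are homomorphically equivalent (fixing $X$). Let $h\colon(S_1,X)\rightarrow(S_2,X)$ and $g\colon(S_2,X)\rightarrow(S_1,X)$. The composition $g\circ h$ is a homomorphism from $(S_1,X)$ to itself; since $(S_1,X)$ is a core it cannot map onto a proper subgraph, which forces $g\circ h$ to be surjective onto $S_1$, and since $\vars{S_1}$ is finite, a surjective endomorphism is a bijection on variables, i.e.\ an automorphism. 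Symmetrically $h\circ g$ is an automorphism of $(S_2,X)$. It follows that $h$ and $g$ are mutually inverse bijections on the variable sets, and each is a homomorphism fixing $X$, so they are isomorphisms; thus $(S_1,X)$ and $(S_2,X)$ agree up to renaming of variables.

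The main obstacle is the step in the uniqueness argument where one passes from ``$g\circ h$ is a surjective endomorphism of a core'' to ``$g\circ h$ is an automorphism.'' This is where the core property and finiteness must be combined carefully: one has to verify that a homomorphism from a core to itself whose image is all of $S$ is in fact injective on variables, and that the presence of the protected set $X$ (whose variables are fixed pointwise) does not obstruct the argument — it only restricts the admissible maps, so the standard endomorphism-rigidity of cores goes through unchanged. Since the excerpt flags that these facts are well-known for CQs and relational structures (citing \cite{AHV95,HN92}), I would keep this step short and appeal to the correspondence, rather than redoing the full rigidity proof.
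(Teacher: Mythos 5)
The paper gives no proof of this proposition at all — it simply cites the standard references — and your argument is precisely the standard existence-plus-rigidity proof those references contain, correctly transported to the setting where the variables in $X$ must be fixed pointwise (which, as you note, only restricts the admissible maps and does not disturb the argument). The one small imprecision is the claim that $h$ and $g$ are \emph{mutually inverse}: $g\circ h$ is an automorphism of $(S_1,X)$ but need not be the identity, so the correct conclusion is that $h$ is a bijective homomorphism whose inverse equals the homomorphism $g\circ(h\circ g)^{-1}$, hence an isomorphism fixing $X$ — a cosmetic fix that does not affect the validity of the proof.
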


\medskip
\noindent
{\bf Treewidth.} The notion of treewidth is a well-known measure of the tree-likeness of an undirected graph (see e.g. \cite{diestel}).  
For instance, trees have treewidth $1$, cycles treewidth $2$ and $K_k$, the clique of size $k$, treewidth $k-1$. 
Let $H$ be an undirected graph. A \emph{tree decomposition} of $H$ 
is a pair $(F,\beta)$ where $F$ is a tree and $\beta$ is a function that maps 
each node $s\in V(F)$ to a subset of $V(H)$ such that
\begin{enumerate}
\item for every $u\in V(H)$, the set 
$\{s\in V(F)\mid u\in \beta(s)\}$ induces a connected subgraph of $F$, and 
\item for every edge $\{u,v\}\in E(H)$, there is a node $s\in V(F)$ with $\{u,v\}\subseteq \beta(s)$. 
\end{enumerate} 
The \emph{width} of the decomposition $(F,\beta)$ is $\max\{|\beta(s)|\mid s\in V(F)\}-1$. 
The \emph{treewidth} $\tw{H}$ of the graph $H$ is the minimum width over all its tree decompositions. 

Let $(S,X)$ be a generalised t-graph. The \emph{Gaifman graph} $G(S,X)$ of $(S,X)$ is the undirected graph 
whose vertex set is $\vars{S}\setminus X$ and whose edge set contains the pairs $\{?x,?y\}$ such that $?x\neq ?y$ and 
$\{?x,?y\}\subseteq\vars{t}$, for some triple pattern $t\in S$. 
We define the \emph{treewidth} of $(S,X)$ to be $\tw{S,X}:=\tw{G(S,X)}$. 
If $G(S,X)$ has no vertices, i.e., $\vars{S}\setminus X=\emptyset$, or $G(S,X)$ has no edges, we let 
$\tw{S,X}=\tw{G(S,X)}:=1$.

For a generalised t-graph $(S,X)$, we let $\ctw{S,X}:=\tw{S',X}$, 
where $(S',X)$ is the core of $(S,X)$.
 
 \begin{figure}
\centering
\includegraphics[scale=0.6]{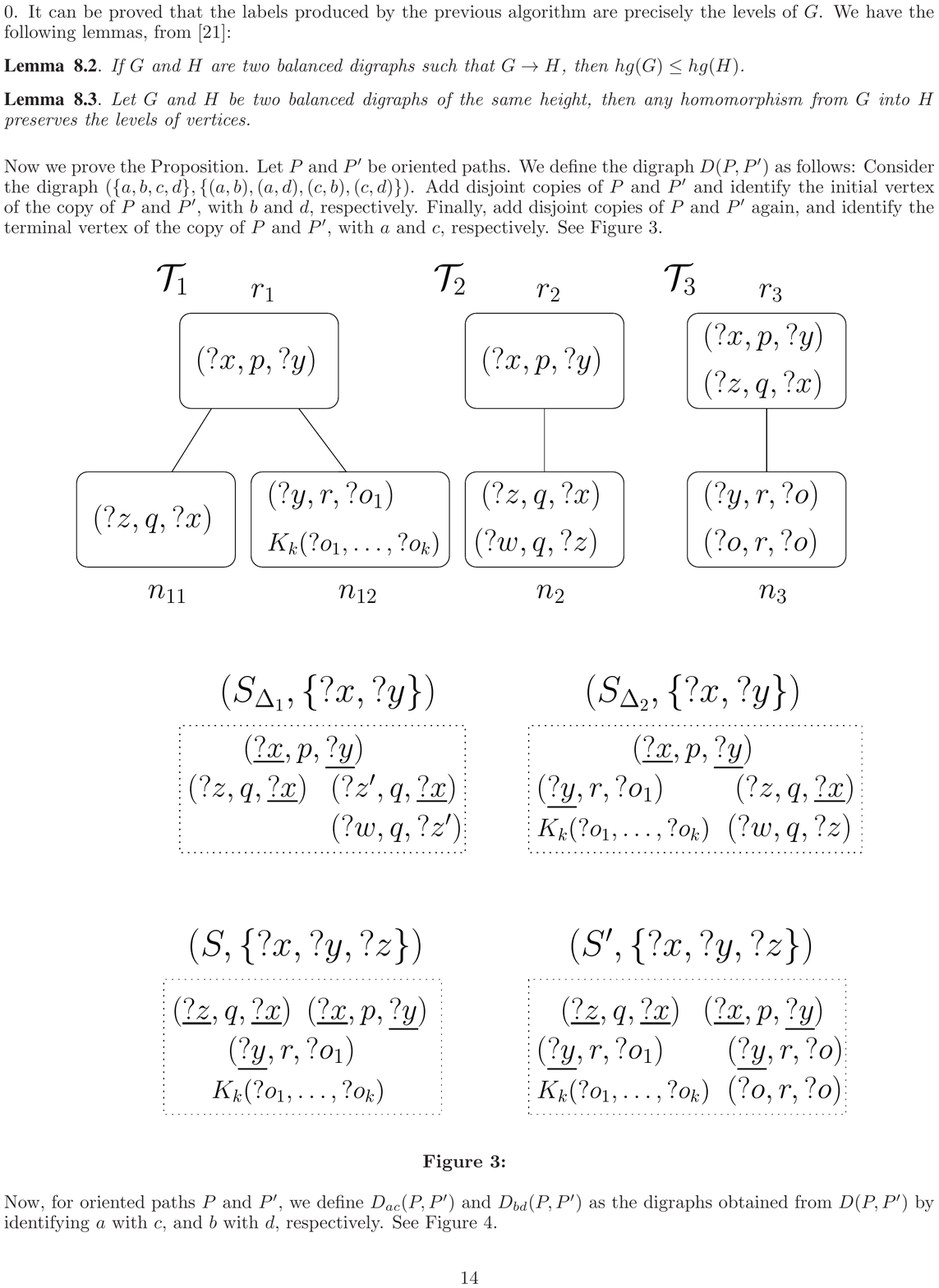}
\caption{\normalfont{The generalised t-graphs from Example \ref{ex:cores}. We assume that $k\geq 2$ and $K_k(?o_1,\dots,?o_k)=\{(?o_i,r,?o_j)\mid i,j\in\{1,\dots,k\} \text{ with $i<j$}\}$. Note that the distinguished variables are underlined.}}
\label{fig:tw}
\end{figure}
 
 \begin{example}
 \label{ex:cores}
 Let $X=\{?x,?y,?z\}$ and consider the generalised t-graphs $(S,X)$ and $(S',X)$ depicted in Figure \ref{fig:tw}, 
 where $k\geq 2$ and $K_k(?o_1,\dots,?o_k)$ is the t-graph given by the set 
 \begin{align*}
  K_k(?o_1,\dots,&?o_k):=\\
  &\{(?o_i,r,?o_j)\mid i,j\in \{1,\dots,k\} \text{ with $i<j$}\}.
 \end{align*}
 Observe that $(S,X)$ is a core and hence $\ctw{S,X}=k-1$, as its Gaifman graph is the clique of size $k$. 
 On the other hand, the core of $(S',X)$ is $(C',X)$, where 
 $$C'=\{(?z,q,?x), (?x,p,?y), (?y,r,?o), (?o,r,?o)\}.$$ 
 Hence, $\ctw{S',X}=1$ while $\tw{S',X}=k-1$. 
 \end{example}

\medskip
\noindent
{\bf Existential $k$-pebble game.} The \emph{existential $k$-pebble game} was introduced by Kolaitis and Vardi \cite{KV95} 
to analyse the expressive power of certain \emph{Datalog programs}. 
While the original definition deals with relational structures, 
here we focus on the natural adaptation to the context of generalised t-graphs and RDF graphs. 

Let $k \geq 2$. 
The existential $k$-pebble game is played by the \emph{Spoiler} and the \emph{Duplicator} 
on a generalised t-graph $(S,X)$, an RDF graph $G$ and a mapping $\mu$ with $\dom{\mu}=X$. 
During the game, the Spoiler only picks elements from $\vars{S}\setminus X$, while the Duplicator picks elements from $\dom{G}$,
where $\dom{G}\subseteq \bI$ is the set of IRIs appearing in $G$. 
In the first round, the Spoiler places pebbles on (not necessarily distinct) elements $?x_1,\dots,?x_k \in \vars{S}\setminus X$, 
and the Duplicator responds by placing pebbles on elements $a_1,\dots,a_k\in\dom{G}$. 
On any further round, the Spoiler removes a pebble and places it on another element $?x\in\vars{S}\setminus X$. 
The Duplicator responds by moving the corresponding pebble to an element $a\in\dom{G}$. 
If after a particular round, the elements covered by the pebbles are $?x_1,\dots,?x_k$ and $a_1,\dots,a_k$ 
for the Spoiler and the Duplicator, respectively, then the \emph{configuration} of the game 
is $\perp$ if $?x_i=?x_j$ and $a_i\neq a_j$, for some $i,j\in\{1,\dots,k\}$ with $i\neq j$; 
otherwise, it is the mapping $\mu\cup\nu$, where $\dom{\nu}=\{?x_1,\dots,?x_k\}$ and $\nu(?x_i)=a_i$, for every $i\in\{1,\dots,k\}$ 
(note that $\dom{\mu}\cap\dom{\nu}=\emptyset$).

The Duplicator wins the game if he has a \emph{winning strategy}, that is, 
he can indefinitely continue playing the game in such a way that the configuration at the end of each round is a mapping $\mu\cup\nu$ that is a 
\emph{partial homomorphism}, i.e., for every triple pattern $t\in S$ with $\vars{t}\subseteq \dom{\mu\cup\nu}$, 
it is the case that $\mu\cup\nu(t)\in G$.  
If the Duplicator can win the existential $k$-pebble game on $(S,X)$, $G$ and $\mu$, then we write 
$(S,X)\rightarrow^\mu_k G$.

Note that if $\vars{S}\setminus X=\emptyset$, then for every $k\geq 2$, 
\begin{equation}
\label{eq:empty}
(S,X)\rightarrow^\mu_k G \text{ if and only if } (S,X)\rightarrow^\mu G, 
\end{equation} 
i.e., $\mu$ is a homomorphism from $S$ to $G$. 
Observe also that for every $k\geq 2$, 
\begin{equation}
\label{eq:relax}
(S,X)\rightarrow^\mu G\text{ implies }(S,X)\rightarrow^\mu_k G. 
\end{equation}
In other words, the relation $\rightarrow_k^\mu$ is a \emph{relaxation} of $\rightarrow^\mu$. 
As we state below, the relaxation given by $\rightarrow_k^\mu$ has good properties in terms of complexity\footnote{The existential $k$-pebble game is known to capture the so-called \emph{$k$-consistency test} \cite{KV00b}, 
which is a well-known heuristic for solving \emph{constraint satisfaction problems} (CSPs).}: 
while checking  the existence of homomorphisms, i.e., $(S,X)\rightarrow^\mu G$ is a well-known NP-complete problem \cite{CM77}, 
checking $(S,X)\rightarrow^\mu_k G$ can be done in polynomial time, for every fixed $k\geq 2$. 

\begin{proposition}[\cite{KV95}; see also \cite{DKV}]
\label{prop:games-ptime}
Let $k\geq 2$. For a given generalised t-graph $(S,X)$, an RDF graph $G$ and a mapping $\mu$ with $\dom{\mu}=X$, 
checking whether $(S,X)\rightarrow^\mu_k G$ can be done in polynomial time. 
\end{proposition}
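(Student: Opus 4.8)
The plan is to show that the existential $k$-pebble game admits a polynomial-time decision procedure by computing the \emph{largest winning strategy} for the Duplicator via a greatest-fixpoint computation, following the standard approach of Kolaitis and Vardi. First I would define the candidate strategy objects: a winning strategy for the Duplicator can be encoded as a nonempty family $\HH$ of partial homomorphisms (configurations) $\mu\cup\nu$, each with $|\dom{\nu}|\leq k$, that is closed under the two moves the Spoiler can make. Concretely, I would require $\HH$ to satisfy the \emph{forth} condition (closure under the Spoiler removing a pebble and placing a new one) and a \emph{down-closure} condition (every restriction of a configuration in $\HH$ to fewer pebbles is again in $\HH$). The key observation is that the Duplicator wins if and only if such a nonempty closed family exists, and that there is a unique \emph{maximal} such family.

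The core of the argument is the fixpoint computation. I would start from the set $H_0$ of \emph{all} mappings $\mu\cup\nu$ with $\dom{\nu}\subseteq\vars{S}\setminus X$, $|\dom{\nu}|\leq k$, that are partial homomorphisms into $G$; note $H_0$ is nonempty since the empty extension (just $\mu$) is a partial homomorphism whenever $\mu$ itself maps no full triple, which one checks directly. Then I would repeatedly remove any configuration that violates the closure conditions: that is, delete $\mu\cup\nu\in H_i$ whenever there exists a Spoiler move leading to a pebble placement for which no valid Duplicator response remains in $H_i$. This is a monotone shrinking operator on the finite lattice of subsets of $H_0$, so the iteration stabilises at a greatest fixpoint $H^\ast$ after at most $|H_0|$ steps. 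The Duplicator wins precisely when $H^\ast$ is nonempty and contains the initial configuration, which here is the configuration extending $\mu$.

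The crucial complexity point is that the number of candidate configurations is polynomial for fixed $k$. Since each configuration is determined by choosing at most $k$ variables from $\vars{S}\setminus X$ together with their images in $\dom{G}$, the size of $H_0$ is bounded by $(|\vars{S}|\cdot|\dom{G}|)^{k}$, which is polynomial in the size of the input once $k$ is a fixed constant. Each round of the fixpoint iteration inspects every configuration against every single-pebble Spoiler move and every possible Duplicator response, costing a further polynomial factor, and at most a polynomial number of rounds are needed. Checking that a given candidate $\mu\cup\nu$ is a partial homomorphism requires only scanning the triples of $S$ whose variables are all pebbled, which is also polynomial. Multiplying these polynomial factors yields the claimed polynomial-time bound for each fixed $k\geq 2$.

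The main obstacle I anticipate is not the complexity bookkeeping but verifying the correctness equivalence: that the greatest fixpoint $H^\ast$ being nonempty (and containing the start configuration) is genuinely equivalent to the Duplicator having a winning strategy in the game as defined here, including the subtlety introduced by the distinguished variables $X$ and the $\perp$ configuration arising when two pebbles on the same variable $?x_i=?x_j$ map to distinct elements $a_i\neq a_j$. I would handle the $\perp$ case by simply declaring that no configuration in $H_0$ identifies a variable with two images, so such placements are ruled out at the outset, and handle $X$ by noting that $\mu$ is a fixed prefix shared by every configuration, so the closure conditions only ever manipulate the $\nu$-part. With these conventions in place, the equivalence is the standard back-and-forth argument: a winning strategy induces a closed nonempty family, and conversely any play can be extended round by round using the closure of $H^\ast$, so I would carry out that induction to close the proof.
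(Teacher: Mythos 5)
Your proposal is correct and follows essentially the same route as the proof the paper relies on: the paper imports this proposition from \cite{KV95,DKV} without reproving it, and the argument there is exactly your greatest-fixpoint ($k$-consistency) computation over the polynomially many partial homomorphisms extending $\mu$ by at most $k$ pebbles. The only slight imprecision is the claim that $\mu$ alone lies in $H_0$ ``whenever $\mu$ maps no full triple''; the right condition is that $\mu(t)\in G$ for every $t\in S$ with $\vars{t}\subseteq X$, but your procedure handles a failure of this correctly since the fixpoint then comes out empty.
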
 

As it turns out, there is a strong connection between existential $k$-pebble games and the notion of treewidth. 
In particular, it was shown by Dalmau et al. \cite{DKV} that the relations $\rightarrow_k$ and $\rightarrow$ coincide 
for generalised t-graphs $(S,X)$ satisfying $\ctw{S,X}\leq k-1$\footnote{In \cite{DKV}, it was shown that $\rightarrow_k$ and $\rightarrow$ coincide for relational structures whose cores have treewidth at most $k-1$. For Proposition \ref{prop:games-tw}, we need a generalisation of the results in \cite{DKV} that considers relational structures equipped with a set of \emph{distinguished elements}. Indeed, 
such distinguished elements correspond to the variables in $X$ and the IRIs appearing in the generalised t-graph $(S,X)$. Such a generalisation follows straightforwardly from the results in \cite{DKV}.}.

\begin{proposition}[\cite{DKV}]
\label{prop:games-tw}
Let $k\geq 2$. Let $(S,X)$ be a generalised t-graph, $G$ be an RDF graph and $\mu$ be a mapping with $\dom{\mu}=X$. 
Suppose that $\ctw{S,X}\leq k-1$. Then $(S,X)\rightarrow^\mu_k G$ if and only if $(S,X)\rightarrow^\mu G$. 
\end{proposition}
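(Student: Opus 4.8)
The plan is to reduce Proposition~\ref{prop:games-tw} to the known result of Dalmau et al.~\cite{DKV} by passing through the core and by formalising the correspondence between generalised t-graphs and relational structures with distinguished elements. The direction $(S,X)\rightarrow^\mu G$ implies $(S,X)\rightarrow^\mu_k G$ is exactly the relaxation property~\eqref{eq:relax}, so the real content is the converse: assuming $(S,X)\rightarrow^\mu_k G$ and $\ctw{S,X}\leq k-1$, I must produce an actual homomorphism witnessing $(S,X)\rightarrow^\mu G$. First I would replace $(S,X)$ by its core $(S^*,X)$, which by Proposition~\ref{prop:core} exists and is unique. By definition of $\ctw{\cdot}$ we have $\tw{S^*,X}\leq k-1$, and since $(S,X)\rightarrow(S^*,X)$ and $(S^*,X)\rightarrow(S,X)$, the transitivity of $\rightarrow$ and its composition with $\rightarrow^\mu$ ensure that $(S,X)\rightarrow^\mu G$ holds if and only if $(S^*,X)\rightarrow^\mu G$. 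Thus it suffices to prove the statement for the core, where the treewidth bound applies to the whole graph rather than just its core.

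Next I would check that the existential $k$-pebble game is also preserved under passing to the core, i.e. that $(S,X)\rightarrow^\mu_k G$ implies $(S^*,X)\rightarrow^\mu_k G$. Since $S^*\subseteq S$, a Duplicator winning strategy on the larger t-graph $(S,X)$ restricts to a winning strategy on the subgraph $(S^*,X)$: every partial homomorphism from $S^*$ is in particular consistent with the constraints of $S^*$, and the Spoiler on $(S^*,X)$ has strictly fewer positions available, so the Duplicator can simply reuse his strategy. With both reductions in place, I am left to prove: if $\tw{S^*,X}\leq k-1$ and $(S^*,X)\rightarrow^\mu_k G$, then $(S^*,X)\rightarrow^\mu G$. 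This is precisely the distinguished-elements generalisation of the Dalmau--Kolaitis--Vardi theorem referenced in the footnote, so the main task is to make that translation explicit rather than to reprove the pebble-game machinery.

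The translation step is where I would spend most of the care, and it is the main obstacle. I would encode the generalised t-graph $(S^*,X)$ as a relational structure $\mathbf{A}$ over a single ternary relation symbol $R$, whose universe is $\vars{S^*}\cup\bI_{S^*}$ (where $\bI_{S^*}$ are the IRIs occurring in $S^*$), with $R^{\mathbf{A}}$ the set of triples in $S^*$ read as tuples, and encode $G$ as the analogous structure $\mathbf{B}$ over the universe $\dom{G}$. The distinguished elements of $\mathbf{A}$ are the variables in $X$ together with the IRIs in $\bI_{S^*}$; the mapping $\mu$ together with the identity on IRIs specifies the required behaviour on these distinguished elements. I must verify three compatibility claims: (i) that $(S^*,X)\rightarrow^\mu G$ coincides with the existence of a homomorphism $\mathbf{A}\to\mathbf{B}$ fixing the distinguished elements as prescribed by $\mu$ and by the identity on IRIs; (ii) that the existential $k$-pebble game of the present paper, in which the Spoiler draws from $\vars{S^*}\setminus X$ and the Duplicator from $\dom{G}$, corresponds to the pebble game of~\cite{DKV} on $\mathbf{A},\mathbf{B}$ with the distinguished elements held fixed (here the $\perp$-configuration and the partial-homomorphism winning condition translate directly); and (iii) that the Gaifman-graph treewidth $\tw{S^*,X}$, computed over $\vars{S^*}\setminus X$, matches the treewidth notion used in the generalised statement of~\cite{DKV}, so that the hypothesis $\tw{S^*,X}\leq k-1$ feeds into their theorem correctly. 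The subtle point throughout is the uniform treatment of the two kinds of distinguished elements---the free variables in $X$ and the constants (IRIs)---since the game only lets the Spoiler move on non-distinguished variables and the IRIs must be fixed by fiat; once this bookkeeping is in place, the generalised Dalmau--Kolaitis--Vardi result delivers the homomorphism and the proof is complete.
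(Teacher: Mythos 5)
Your proposal is correct and takes essentially the same route the paper intends: the paper gives no proof of Proposition~\ref{prop:games-tw}, citing \cite{DKV} with a footnote asserting that the generalisation to distinguished elements (the variables in $X$ and the IRIs) follows straightforwardly, and your argument---reduce to the core via Proposition~\ref{prop:core} and the composition properties of $\rightarrow$ and $\rightarrow^\mu_k$, then translate $(S^*,X)$ and $G$ into relational structures over a single ternary relation with the distinguished elements held fixed and invoke the Dalmau--Kolaitis--Vardi theorem---is precisely the bookkeeping that footnote leaves implicit. The core-restriction step for the pebble game that you argue directly is also immediate from Proposition~\ref{prop:games}, item~(1), since $S^*\subseteq S$ gives $(S^*,X)\rightarrow(S,X)$.
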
 

We conclude  with two basic properties of the existential pebble game that will be useful for us. 

\begin{proposition}
\label{prop:games}
Let $k\geq 2$. Let $(S_1,X)$, $(S_2,X)$, $\dots$, $(S_\ell,X)$ be generalised t-graphs $(\ell \geq 2)$, $G$ be an RDF graph and $\mu$ be a mapping with $\dom{\mu}=X$. 
Then the following hold:
\begin{enumerate}
\item if $(S_1,X)\rightarrow (S_2,X)$ and $(S_2,X)\rightarrow^\mu_k G$, then it is the case that  $(S_1,X)\rightarrow^\mu_k G$. 
\item if $(S_i,X)\rightarrow^\mu_k G$, for all $i\in\{1,\dots,\ell\}$ and $(\vars{S_i}\setminus X)\cap (\vars{S_j}\setminus X)=\emptyset$, 
for all $i,j\in\{1,\dots,\ell\}$ with $i\neq j$, then $(S_1\cup\cdots\cup S_\ell, X)\rightarrow^\mu_k G$. 

\end{enumerate}
\end{proposition}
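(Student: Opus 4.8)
The plan is to prove both parts directly from the definition of a winning strategy in the existential $k$-pebble game, combining the Duplicator's strategies on the smaller instances into a strategy on the combined instance. Recall that a winning strategy is a way for the Duplicator to maintain, after every round, a configuration $\mu\cup\nu$ that is a partial homomorphism. The key technical notion I would invoke (implicitly present in the standard treatment) is that a Duplicator winning strategy can be encoded as a nonempty family $\mathcal{H}$ of partial homomorphisms $h\colon \vars{S}\setminus X \rightharpoonup \dom{G}$ with $|\dom{h}|\le k$ that is closed downward under restriction and has the extension property: for every $h\in\mathcal{H}$ with $|\dom{h}|<k$ and every $?x\in\vars{S}\setminus X$, some $h'\in\mathcal{H}$ extends $h$ with $?x\in\dom{h'}$. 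Here by ``partial homomorphism'' I mean, consistently with the configuration definition, that $\mu$ together with the partial map sends every triple whose variables are all pebbled into $G$. I would restate $(S,X)\rightarrow_k^\mu G$ as the existence of such a family.

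For part (1), suppose $g$ is a homomorphism from $(S_1,X)$ to $(S_2,X)$ and let $\mathcal{H}_2$ witness $(S_2,X)\rightarrow_k^\mu G$. Since $g(?x)=?x$ for $?x\in X$ and $g$ maps $\vars{S_1}$ into $\vars{S_2}$, I would define $\mathcal{H}_1:=\{\, h\circ (g{\restriction}\dom{h}) : h\in\mathcal{H}_2\,\}$ suitably corrected so that the domains are subsets of $\vars{S_1}\setminus X$; concretely, for $h\in\mathcal{H}_2$ put $h':=h\circ g$ restricted to $\{?x\in\vars{S_1}\setminus X : g(?x)\in\dom{h}\}$. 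The closure and extension properties of $\mathcal{H}_1$ follow from those of $\mathcal{H}_2$ together with $|\dom{h'}|\le|\dom{h}|\le k$, and each $h'$ is a partial homomorphism for $(S_1,X)$ because any triple $t\in S_1$ with variables pebbled maps under $g$ to a triple $g(t)\in S_2$ whose variables are pebbled by $h$, so $\mu\cup h'(t)=\mu\cup h(g(t))\in G$. This is the standard ``composition with a homomorphism'' argument and should be routine.

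For part (2), I would take winning families $\mathcal{H}_1,\dots,\mathcal{H}_\ell$ for the instances $(S_i,X)$ and define the family $\mathcal{H}$ on $(S_1\cup\cdots\cup S_\ell,X)$ by taking unions $h_1\cup\cdots\cup h_\ell$ of compatible partial homomorphisms $h_i\in\mathcal{H}_i$ with $\sum_i|\dom{h_i}|\le k$. The crucial point is the disjointness hypothesis $(\vars{S_i}\setminus X)\cap(\vars{S_j}\setminus X)=\emptyset$ for $i\ne j$: it guarantees that the maps $h_i$ have pairwise disjoint domains, so their union is a well-defined function and, more importantly, that any triple $t$ in the combined t-graph lies wholly in some single $S_i$, hence its variables are handled entirely by $h_i$ and it maps correctly into $G$. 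The extension property is where the main care is needed: given $h\in\mathcal{H}$ with $|\dom{h}|<k$ and a target variable $?x$, that $?x$ belongs to exactly one $\vars{S_i}\setminus X$, and I use the extension property of $\mathcal{H}_i$ to extend the corresponding $h_i$; because the global budget is still below $k$, the extended tuple remains in $\mathcal{H}$. The main obstacle, and the step I would check most carefully, is precisely bookkeeping the pebble budget across the components so that the extension property of the combined family is preserved while keeping $|\dom{h}|\le k$; the disjointness of variable sets is exactly what makes the per-component strategies non-interfering and lets this budget argument go through.
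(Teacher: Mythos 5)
The paper states this proposition without proof (it is presented as a known, basic property of the existential pebble game), so there is nothing internal to compare against; your route via the standard encoding of a Duplicator winning strategy as a nonempty, downward-closed family of partial homomorphisms with the $k$-extension property is exactly the canonical way to prove both items, and your part (2) is essentially complete: disjointness of the $\vars{S_i}\setminus X$ gives well-definedness of the unions, localises every triple's pebbled variables to a single component, and the budget bookkeeping you worry about goes through because $|\dom{h_i}|\leq|\dom{h}|<k$ lets you invoke the extension property of $\mathcal{H}_i$ and the total domain grows by at most one.

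There is, however, one concrete gap in part (1). A homomorphism $g$ from $(S_1,X)$ to $(S_2,X)$ is only required to fix $X$ pointwise; it may send a variable $?x\in\vars{S_1}\setminus X$ to an element of $X$ or to an IRI in $\bI$. With your definition $\dom{h'}=\{?x\in\vars{S_1}\setminus X : g(?x)\in\dom{h}\}$, such a variable $?x$ never lies in the domain of any member of $\mathcal{H}_1$ (since $\dom{h}\subseteq\vars{S_2}\setminus X$), so $\mathcal{H}_1$ fails the extension property and is not a winning strategy. The fix is to set $h'(?x)=\mu(g(?x))$ when $g(?x)\in X$ and $h'(?x)=g(?x)$ when $g(?x)\in\bI$, and to define the admissible domains as all $D\subseteq\vars{S_1}\setminus X$ with $|D|\leq k$ and $g(D)\cap(\vars{S_2}\setminus X)\subseteq\dom{h}$. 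One must then check these responses are legal, i.e., that $\mu(g(?x))$ and $g(?x)$ lie in $\dom{G}$: this follows because any triple of $S_1$ containing $?x$ maps under $g$ to a triple of $S_2$ with at most two variables outside $X$ (a triple has three positions and one is occupied by the image of $?x$), so with $k\geq 2$ the Spoiler can pebble all of them and the partial-homomorphism condition on $\mathcal{H}_2$ forces a witnessing triple of $G$ containing that element. With this correction your composition argument for the triple condition goes through verbatim.
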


\subsection{Domination width}

We start by giving some intuition regarding the notion of domination width. 
Let $P$ be a well-designed graph pattern, $G$ be an RDF graph and $\mu$ be a mapping. 
Suppose that $\wdpf(P)=\F$ and $\F=\{\T_1,\dots,\T_m\}$, for $m\geq 1$. 
The natural algorithm for checking $\mu\not\in\sem{\F}{G}$ is as follows (see e.g. \cite{lete,PS14}): 
we simply iterate over all $i\in\{1,\dots,m\}$ such that $\mu$ is a \emph{potential solution} of $\T_i$ over $G$, i.e., there is a subtree $\T'_i$ of $\T_i$ such that $\mu$ is a homomorphism 
from $\pat{\T'_i}$ to $G$, and we ensure that there is a child $n_i$ of $\T'_i$ where $\mu$ can be extended consistently. 

 The key observation is that we can reinterpret the above-described algorithm as follows. 
We can choose one of the subtrees $\T'_i$ as above, and associate a collection of generalised t-graphs $\GtG{\T'_i}$ of 
 the form $(S,\vars{\T'_i})$, where $S=\pat{\T'_i}\cup\bigcup_{j\in I} \pat{n_j}$, where $I\subseteq \{1,\dots,m\}$ is the set of indices $j$ such that $\mu$ is a potential solution of $\T_j$ over $G$, 
 and $n_j$ is a child of $\T'_j$. To avoid conflicts, for every $j\in I$, the variables from $\vars{n_j}$ that are not in $\vars{\T'_i}=\dom{\mu}$, need to be renamed to fresh variables. 
 Therefore, checking  $\mu\not\in\sem{\F}{G}$ amounts to checking that there is a homomorphism from some element of $\GtG{\T'_i}$ to $G$, i.e., 
 whether $(S,\vars{\T'_i})\rightarrow^\mu G$, for some $(S,\vars{\T'_i})\in \GtG{\T'_i}$. 
 
 The idea behind domination width is to ensure that $\GtG{\T'_i}$ is always \emph{dominated} by a subset $\G\subseteq \GtG{\T'_i}$ where each generalised t-graph in $\G$ has small $\text{ctw}$. 
 The set $\G$ dominates $\GtG{\T'_i}$ in the sense that, for every $(S',\vars{\T'_i})\in\GtG{\T'_i}$, there is a $(S,\vars{\T'_i})\in\G$ such that $(S,\vars{\T'_i})\rightarrow (S',\vars{\T'_i})$. 
 Therefore, by transitivity of the relation $\rightarrow$, checking $\mu\not\in\sem{\F}{G}$ amounts to checking that there is a homomorphism from some element of $\G$ to $G$. 
 Since generalised t-graphs of small ctw are well-behaved with respect to the relaxation $\rightarrow_k$ (see Proposition \ref{prop:games-tw}), 
 this will imply that the relaxation of the natural algorithm, described at the beginning of this section, given by replacing homomorphism tests $\rightarrow$ by $\rightarrow_k$,  
 correctly decides if $\mu\not\in\sem{\F}{G}$. 
 Below we formalise this intuition. 

Let $\F=\{\T_1,\dots,\T_m\}$ be a wdPF. 
A \emph{subtree} $\T$ of $\F$ is a subtree of some wdPT $\T_i$, for $i\in\{1,\dots,m\}$. 
The \emph{support} $\supp{\T}$ of the subtree $\T$ contains precisely the indices $i$ from
$\{1,\dots,m\}$ such that there is a subtree $\T_i'$ of $\T_i$ satisfying $\vars{\T_i'}=\vars{\T}$.   
Note that $\supp{\T}\neq \emptyset$, for every subtree $\T$. 
Since wdPTs are in NR normal form, whenever $i\in \supp{\T}$, then the witness subtree $\T'_i$ is unique. 
For $i\in\supp{\T}$, we denote such a $\T'_i$ by $\T^{\tsupp}(i)$.

Let $\T$ be a subtree of $\F=\{\T_1,\dots,\T_m\}$. 
A \emph{children assignment} for $\T$ is a function $\Delta$ with a non-empty domain $\dom{\Delta}\subseteq \supp{\T}$ that maps every $i\in\dom{\Delta}$ to a child $\Delta(i)$ of $\T^{\tsupp}(i)$. 
We denote by $\CA{\T}$ the set of all children assignments for $\T$. 
Observe that if $\Delta\in\CA{\T}$, then it must be the case that $\T^{\tsupp}(i)\neq \T_i$, for every $i\in\dom{\Delta}$. 
In particular, it could be the case that $\CA{\T}=\emptyset$. 
The \emph{renamed t-graphs assignment} $\rho_\Delta$ associated with $\Delta$ maps 
$i\in \dom{\Delta}$ to a t-graph $\rho_\Delta(i)$ obtained from $\pat{\Delta(i)}$ by renaming all variables in $\vars{\Delta(i)}\setminus \vars{\T}$ to new fresh variables. 
In particular, if $i,j\in \dom{\Delta}$ and $i\neq j$, then 
$$(\vars{\rho_\Delta(i)}\setminus\vars{\T}) \cap (\vars{\rho_\Delta(j)}\setminus \vars{\T}) = \emptyset.$$

For $\Delta\in\CA{\T}$, we define the t-graph $S_\Delta$ as 
$$S_\Delta:=\pat{\T}\cup\bigcup_{i\in\dom{\Delta}}\rho_\Delta(i).$$ 
We say that a children assignment $\Delta\in\CA{\T}$ is \emph{valid} if for every $i\in\supp{\T}\setminus \dom{\Delta}$, we have that
$$(\pat{\T^{\tsupp}(i)}, \vars{\T})\not\rightarrow (S_\Delta,\vars{\T}).$$
We denote by $\VCA{\T}$ the set of valid children assignments for $\T$. 
Finally, for the subtree $\T$, we define the set of generalised t-graphs \emph{associated with} $\T$ as
$$\GtG{\T}:=\{(S_\Delta, \vars{\T})\mid \Delta\in \VCA{\T}\}.$$

\begin{figure}
\centering
\includegraphics[scale=0.52]{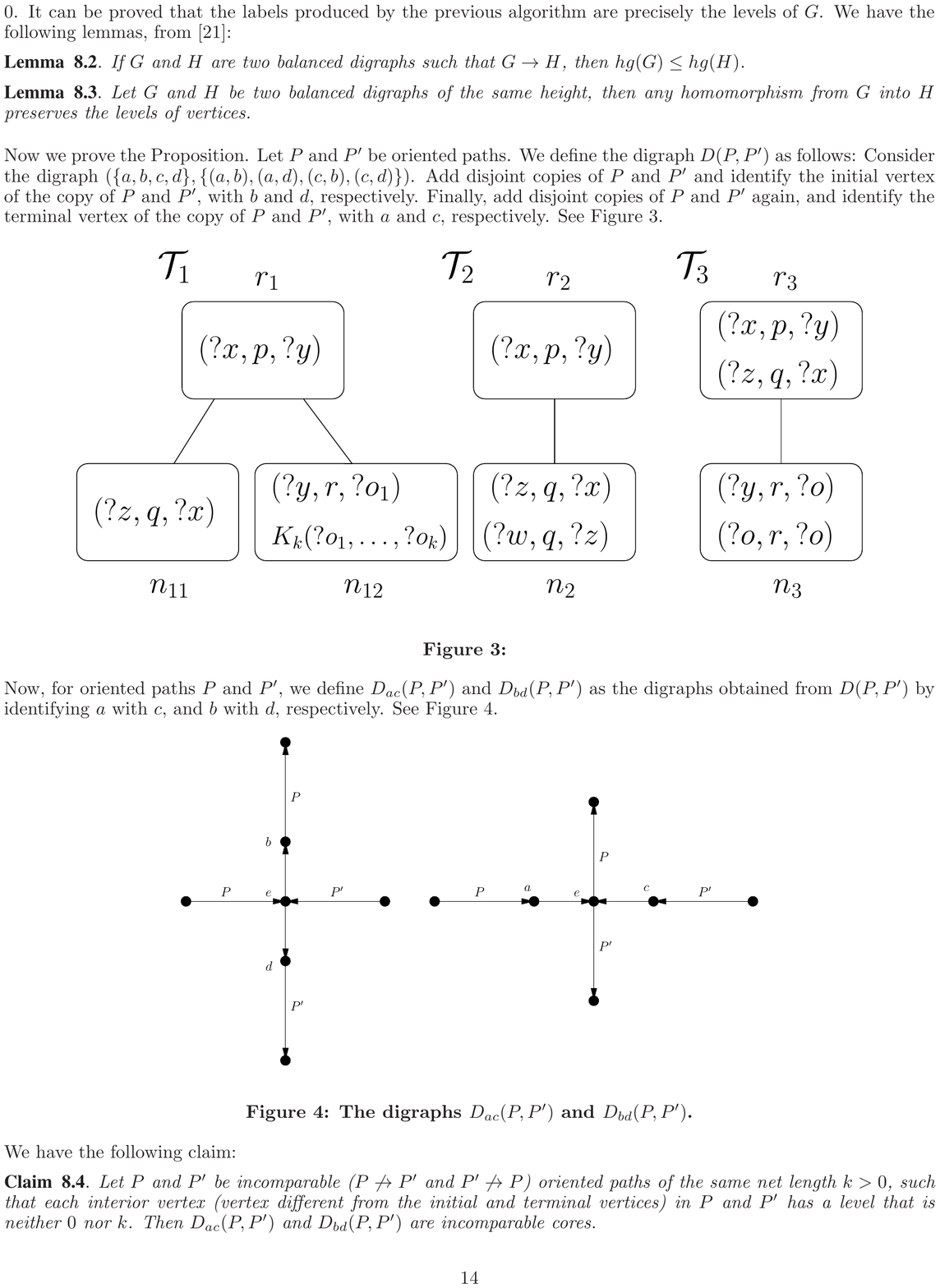}
\caption{\normalfont{The wdPF $\F_k=\{\T_1,\T_2,\T_3\}$ of Example \ref{ex:domination}.}}
\label{fig:main}
\end{figure}

\begin{figure}
\centering
\includegraphics[scale=0.6]{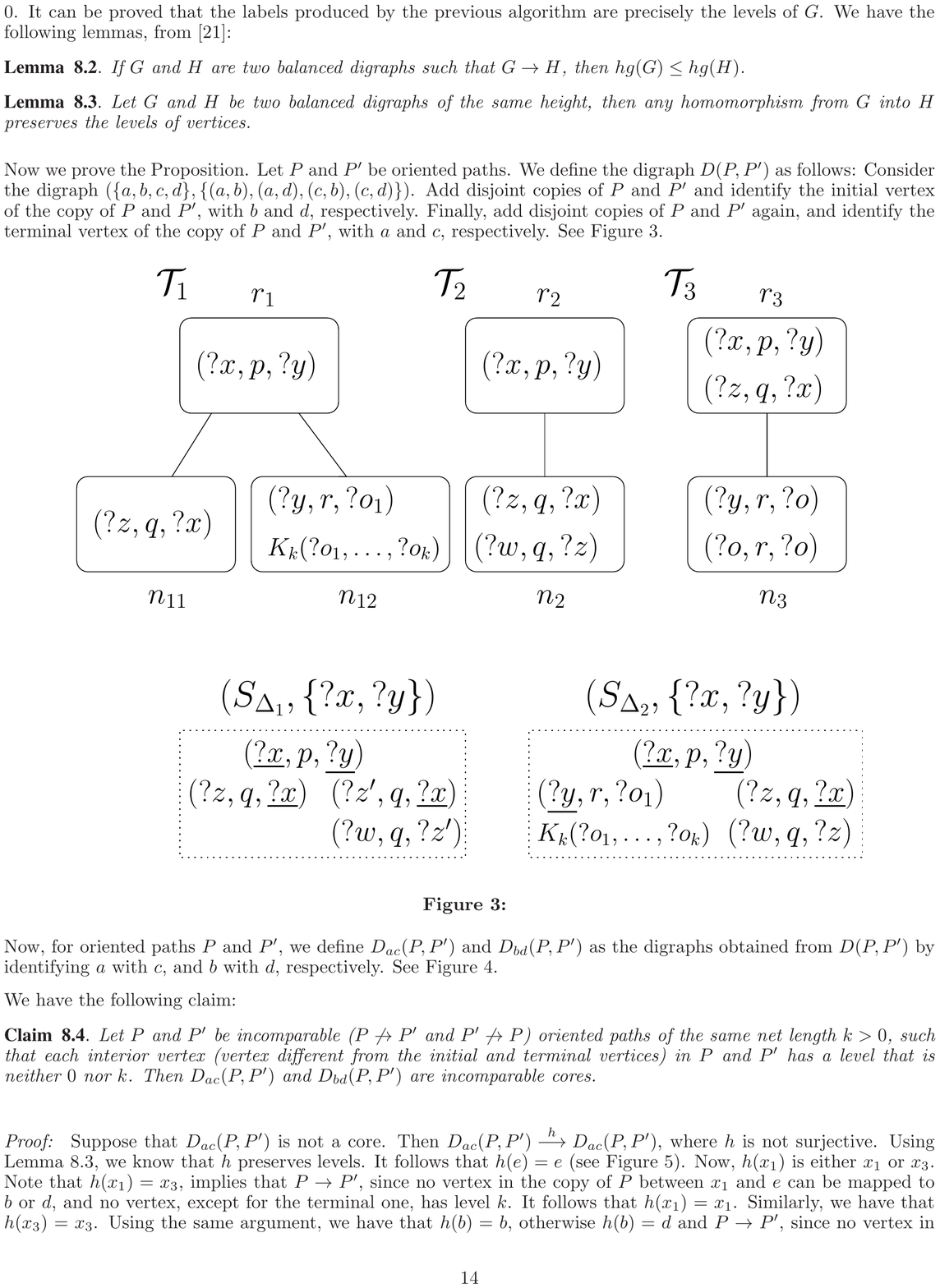}
\caption{\normalfont{The generalised t-graphs $(S_{\Delta_1},\{?x,?y\})$ and $(S_{\Delta_2},\{?x,?y\})$ from Examples \ref{ex:domination} and \ref{ex:dom1}.}} 
\label{fig:dom}
\end{figure}

\begin{example}
\label{ex:domination}
Let $k\geq 2$. Recall from  Example \ref{ex:cores} that 
 \begin{align*}
  K_k(?o_1,\dots,&?o_k):=\\
  &\{(?o_i,r,?o_j)\mid i,j\in \{1,\dots,k\} \text{ with $i<j$}\}.
 \end{align*}
Consider the wdPF $\F_k=\{\T_1,\T_2,\T_3\}$ depicted in Figure \ref{fig:main}. 
For a wdPT $\T=(T,r,\lambda)$ and a subset $N\subseteq V(T)$, we denote by $\T[N]$ 
the subtree of $\T$ induced by the set of nodes $N$. Observe that the only subtrees $\T$ of $\F$ with a non-empty set $\GtG{\T}$ are $\T_1[r_1]$, $\T_1[r_1, n_{11}]$, $\T_1[r_1, n_{12}]$, 
$\T_2[r_2]$ and $\T_3[r_3]$. Consider first $\T_1[r_1]$ and note that $\supp{\T_1[r_1]}=\{1,2\}$. 
We have that 
$$\GtG{\T_1[r_1]}=\{(S_{\Delta_1},\{?x,?y\}), (S_{\Delta_2},\{?x,?y\})\}$$
with $\Delta_1,\Delta_2\in \VCA{\T_1[r_1]}$, where $\Delta_1$ and $\Delta_2$ are described by $\Delta_1=\{1\mapsto n_{11}, 2\mapsto n_2\}$ and $\Delta_2=\{1\mapsto n_{12}, 2\mapsto n_2\}$. 
Figure \ref{fig:dom} illustrates $(S_{\Delta_1},\{?x,?y\})$ and $(S_{\Delta_2},\{?x,?y\})$. Note how we need to rename $?z$ to a fresh variable $?z'$ in $(S_{\Delta_1},\{?x,?y\})$. 
Observe also that, for instance, the children assignment given by $\Delta_3=\{1\mapsto n_{11}\}$ is not valid as $2\not\in\dom{\Delta_3}$ and 
$$(\pat{\T_2[r_2]},\{?x,?y\})\rightarrow(S_{\Delta_3},\{?x,?y\}).$$
For $\T_1[r_1, n_{11}]$, we have that 
$$\GtG{\T_1[r_1, n_{11}]}=\{(S_\Delta,\{?x,?y,?z\})\}$$
where $\Delta=\{1\mapsto n_{12}, 3\mapsto n_3\}$. 
Note that $(S',\{?x,?y,?z\})$ in Figure \ref{fig:tw} corresponds to $(S_\Delta,\{?x,?y,?z\})$. 
In the case of $\T_1[r_1, n_{12}]$, we have that 
$$\GtG{\T_1[r_1, n_{12}]}=\{(S_{\Delta'},\{?x,?y,?o_1,\dots,?o_k\})\}$$
where $\Delta'=\{1\mapsto n_{11}\}$. 
Finally, note that $\GtG{\T_2[r_2]}$ $=$ $\GtG{\T_1[r_1]}$ and $\GtG{\T_3[r_3]}$ $=$ $\GtG{\T_1[r_1,n_{11}]}$.
\end{example}

Now we are ready to define domination width. 

\begin{definition}[$k$-domination]
Let $\G$ be a set of generalised t-graphs of the form $\G=\{(S,X)\mid S\in\ES\}$, 
where $\ES$ is a set of t-graphs and $X$ is a fixed set of variables with $X\subseteq \vars{S}$, for all $S\in\ES$. 
We say that $\G'\subseteq \G$ is a \emph{dominating set} of $\G$ if for every $(S,X)\in\G\setminus \G'$, 
there exists $(S',X)\in \G'$ such that $(S',X)\rightarrow (S,X)$. 

We say that $\G$ is \emph{$k$-dominated} if the set $\{(S,X)\in\G\mid \ctw{S,X}\leq k\}$ is a dominating set of $\G$.
\end{definition}

\begin{definition}[Domination width]
Let $\F$ be a wdPF. The \emph{domination width} of $\F$, denoted by $\dw{\F}$, is the minimum positive integer such that 
for every subtree $\T$ of $\F$, the set of generalised t-graphs $\GtG{\T}$ is $k$-dominated. 

For a well-designed graph pattern $P$, we define the \emph{domination width} of $P$ as $\dw{P}:=\dw{\wdpf(P)}$. 
\end{definition}

We say that a class $\C$ of well-designed graph patterns has \emph{bounded} domination width if there is a universal constant $k\geq 1$ such that 
$\dw{P}\leq k$, for every $P\in\C$. 

\begin{example}
\label{ex:dom1}
Consider a class $\C=\{P_k\mid k\geq 2\}$ such that $\wdpf(P_k)=\F_k$, where $\F_k$ is the wdPF defined in Figure \ref{fig:main} and Example \ref{ex:domination}. 
We claim that $\C$ has bounded domination width as for every $k\geq 2$, it is the case that $\dw{\F_k}=1$. 
Indeed, following the notation from Example \ref{ex:domination}, we need to check that $\GtG{\T_1[r_1]}$, $\GtG{\T_1[r_1, n_{11}]}$ 
and $\GtG{\T_1[r_1, n_{12}]}$ are $1$-dominated. 

Note first that $\ctw{S_{\Delta'},\{?x,?y,?o_1,\dots,?o_k\}}=1$. Therefore, 
$\GtG{\T_1[r_1, n_{12}]}$ is $1$-dominated. Observe also that $(S_\Delta,\{?x,?y,?z\})$ coincides with $(S',\{?x,?y,?z\})$ from Figure \ref{fig:tw} and,  
as explained in Example \ref{ex:cores}, we have $\ctw{S',\{?x,?y,?z\}}=1$. It follows that $\GtG{\T_1[r_1, n_{11}]}$  is also $1$-dominated. 
Finally, for $\T_1[r_1]$, we have that $\ctw{S_{\Delta_1},\{?x,?y\}}$ $=$ $1$ and $\ctw{S_{\Delta_2},\{?x,?y\}}$ $=$ $k-1$ (see Figure \ref{fig:dom}). 
However, we have that $(S_{\Delta_1},\{?x,?y\})$ $\rightarrow$ $(S_{\Delta_2},\{?x,?y\})$, and hence, $\GtG{\T_1[r_1]}$ is also $1$-dominated. 
\end{example}

The following is our main tractability result. 

\begin{theorem}[Main tractability]
\label{theo:main-tractab}
Let $\C$ be a class of well-designed graph patterns of bounded domination width. 
Then $\wdeval(\C)$ is in $\ptime$.  
\end{theorem}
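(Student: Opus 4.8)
\medskip
\noindent
\textbf{Proof plan.}
Let $d$ be the universal bound with $\dw{P}\leq d$ for every $P\in\C$, and set $k:=d+1$. The plan is to run the natural evaluation algorithm for well-designed pattern forests, but with every homomorphism test $\rightarrow^\mu$ replaced by the existential $k$-pebble relaxation $\rightarrow_k^\mu$, and to prove that this relaxation is both correct (thanks to the domination bound) and polynomial-time computable (thanks to Proposition \ref{prop:games-ptime}).

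First I would make the intuition above precise through a characterisation lemma: writing $\F=\wdpf(P)$ and $X=\dom{\mu}$, the mapping $\mu$ belongs to $\sem{\F}{G}$ iff there is a subtree $\T$ of $\F$ with $\vars{\T}=X$ such that $\mu$ is a homomorphism from $\pat{\T}$ to $G$ and, for every $(S,X)\in\GtG{\T}$, $(S,X)\not\rightarrow^\mu G$. The forward direction follows from Lemma \ref{lemma:char-pt}: if $\T=\T^{\tsupp}(i_0)$ is the witnessing subtree, then every valid $\Delta$ must have $i_0\in\dom{\Delta}$ (since $\pat{\T}\subseteq S_\Delta$ always holds, so validity would fail otherwise), whence $(S_\Delta,X)\rightarrow^\mu G$ would force the child $\Delta(i_0)$ of $\T$ to extend $\mu$, contradicting the witness property. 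For the converse I would use a closure construction: assuming some child of $\T$ extends $\mu$, repeatedly add to $\dom{\Delta}$ every index of $\supp{\T}$ whose pattern embeds into the current $S_\Delta$; the process either exposes an index $i\in\supp{\T}$ with no extending child, giving $\mu\in\sem{\T_i}{G}$ outright, or terminates at a valid $\Delta$ with $(S_\Delta,X)\rightarrow^\mu G$, contradicting the hypothesis. A convenient simplification here is that any homomorphism from $(\pat{\T^{\tsupp}(i)},X)$ to $(S_\Delta,X)$ must fix all of $X$ and is therefore the identity, so $(\pat{\T^{\tsupp}(i)},X)\rightarrow(S_\Delta,X)$ is equivalent to the inclusion $\pat{\T^{\tsupp}(i)}\subseteq S_\Delta$; this turns validity into a purely combinatorial condition.

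The heart of the argument, which I expect to be the main obstacle, is to show that the relaxed test is faithful: for every candidate subtree $\T$,
\emph{some $(S,X)\in\GtG{\T}$ satisfies $(S,X)\rightarrow^\mu G$ iff some $(S',X)\in\GtG{\T}$ satisfies $(S',X)\rightarrow_k^\mu G$.}
The forward implication is \eqref{eq:relax}. For the converse, suppose $(S',X)\rightarrow_k^\mu G$. Since $\dw{\F}\leq d=k-1$, the set $\GtG{\T}$ is $(k-1)$-dominated, so there is $(S,X)\in\GtG{\T}$ with $\ctw{S,X}\leq k-1$ and $(S,X)\rightarrow(S',X)$ (taking $(S,X)=(S',X)$ when $(S',X)$ already has small core treewidth). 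Proposition \ref{prop:games}(1) then gives $(S,X)\rightarrow_k^\mu G$, and Proposition \ref{prop:games-tw}, applicable because $\ctw{S,X}\leq k-1$, upgrades this to $(S,X)\rightarrow^\mu G$. Combined with the characterisation lemma, this shows the relaxed algorithm decides $\mu\in\sem{\F}{G}$ exactly. The delicate point is that domination is guaranteed only for the whole set $\GtG{\T}$, so the argument must operate on the combined graphs $S_\Delta$ rather than on the children of each $\T^{\tsupp}(i)$ separately; a naive per-child relaxation would be unsound, as an individual child of large $\ctw$ can yield a spurious Duplicator win.

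Finally I would bound the running time. Because the wdPTs are in NR normal form, each $\T_i$ has at most one subtree with a prescribed variable set, so there are at most $|\F|$ candidate subtrees $\T$ with $\vars{\T}=X$, all computable in polynomial time, and checking that $\mu$ is a homomorphism from $\pat{\T}$ to $G$ is immediate. For a fixed $\T$ and a fixed children assignment, Proposition \ref{prop:games}(2), together with the fact that a subgraph maps homomorphically into its ambient graph, lets me decompose $(S_\Delta,X)\rightarrow_k^\mu G$ into the conjunction over $i\in\dom{\Delta}$ of the local tests asserting that the graph obtained by adjoining a freshly renamed copy of $\pat{\Delta(i)}$ to $\pat{\T}$ relaxes into $G$; each local test is a single existential $k$-pebble game, solvable in polynomial time by Proposition \ref{prop:games-ptime} since $k$ is a fixed constant on $\C$. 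Using the set-inclusion form of validity, the existence of a valid children assignment realising all these local wins can then be decided by a polynomial fixpoint computation that closes $\dom{\Delta}$ under the indices activated by the current $S_\Delta$. Assembling these pieces yields a polynomial-time algorithm for $\wdeval(\C)$.
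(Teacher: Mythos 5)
Your two mathematical ingredients are exactly the ones the paper relies on: the characterisation of $\mu\in\sem{\F}{G}$ in terms of the sets $\GtG{\T}$, and the ``faithfulness'' step in which a relaxed win $(S',\vars{\T})\rightarrow_k^\mu G$ is transferred, via domination, Proposition \ref{prop:games}(1) and Proposition \ref{prop:games-tw}, to a genuine homomorphism from a dominating element of small $\ctw{\cdot}$. That part of your argument is correct. The gap is in the algorithm you build on top of it. You propose to decide, for each candidate subtree $\T$, whether some $(S_\Delta,\vars{\T})\in\GtG{\T}$ satisfies $(S_\Delta,\vars{\T})\rightarrow_k^\mu G$, i.e.\ whether there exists a \emph{valid} children assignment $\Delta\in\VCA{\T}$ all of whose local tests succeed. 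The set of children assignments can be exponential, and your ``polynomial fixpoint computation that closes $\dom{\Delta}$ under the indices activated by the current $S_\Delta$'' is not justified: which child you assign to an index affects both whether that index's local test passes and which further indices of $\supp{\T}$ become activated (different children contribute different triples with all variables in $\vars{\T}$ to $S_\Delta$), and an activated index may have no locally winning child, at which point you would have to backtrack to an assignment that avoids activating it. Nothing in your sketch shows this search is monotone or confluent, so the polynomial running time of your test --- and hence the theorem --- is not established.

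Notably, the step you dismiss as unsound is precisely the paper's algorithm. The paper iterates over $i\in\{1,\dots,m\}$, finds the unique subtree $\T^\mu_i$ on which $\mu$ is a homomorphism, and accepts as soon as \emph{no} child $n$ of $\T^\mu_i$ satisfies the per-child relaxed test $(\pat{\T^\mu_i}\cup\pat{n},\vars{\T^\mu_i})\rightarrow_{k+1}^\mu G$; validity and the combined graphs $S_\Delta$ never appear in the algorithm, only in its correctness proof. A spurious Duplicator win on an individual child of large $\ctw{\cdot}$ does not make this unsound: soundness is immediate from property (\ref{eq:relax}), and completeness is proved by assuming the algorithm rejects, collecting one relaxed-winning child $n_i$ per surviving index into a single children assignment $\Delta$, showing that $\Delta$ is valid and that $(S_\Delta,\vars{\T})\rightarrow_{k+1}^\mu G$ via Proposition \ref{prop:games}(2), and only then invoking domination and Proposition \ref{prop:games-tw} to extract a real homomorphism contradicting $\mu\in\sem{\T_\ell}{G}$. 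Every piece needed for that argument is already in your write-up; moving the combination-plus-domination reasoning out of the algorithm and into the proof of completeness closes the gap.
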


\begin{proof}
Let $k\geq 1$ be a positive integer such that $\dw{P}\leq k$, for all $P\in\C$. 
Fix $P\in\C$, RDF graph $G$ and mapping $\mu$. Let $\F:=\wdpf(P)$ and suppose that $\F=\{\T_1,\dots,\T_m\}$. 
As hinted at the beginning of this section, the idea for checking $\mu\in\sem{\F}{G}$ is to apply the natural evaluation algorithm for wdPFs (see e.g. \cite{lete,PS14}), 
but instead of checking whether $\mu$ can be extended to a child via a homomorphism, 
we check whether it can be extended via the existential $(k+1)$-pebble game. 

Formally, we iterate over the set $\{1,\dots,m\}$ starting from $i=1$, and check the existence of a subtree $\T^\mu_i$ of $\T_i$ such that 
$\mu$ is a homomorphism from $\pat{\T^\mu_i}$ to $G$ (in particular, $\vars{\T^\mu_i}=\dom{\mu}$). 
If there is no such a subtree, we continue with $i+1$. 
By condition (3) of wdPTs, the previous check can be done in polynomial time. 
Note also that $\T^\mu_i$ is unique (if it exists), as $\T_i$ is in NR normal form. 
We now check that for all children $n$ of $\T^\mu_i$, it is \emph{not} the case that 
$$(\pat{\T^\mu_i}\cup\pat{n}, \vars{\T^\mu_i})\rightarrow_{k+1}^\mu G.$$
If this holds, we \emph{accept} the instance; otherwise we continue with $i+1$. 
If for every $i\in\{1,\dots,m\}$ the instance is not accepted, then we \emph{reject} the instance. 

Notice that by Proposition \ref{prop:games-ptime} the above-described algorithm can be implemented in polynomial time. 
Observe also that the algorithm is always sound (independently of the assumption that $\C$ is of bounded domination width). 
Indeed, suppose that $\mu\not\in\sem{\F}{G}$. This means that for each $i\in\{1,\dots,m\}$, 
either $\T^\mu_i$ does not exist or there is such a $\T^\mu_i$
and there is a child $n$ and a homomorphism $\nu$ from $\pat{n}$ to $G$ compatible with $\mu$. 
In other words, 
$$(\pat{\T^\mu_i}\cup\pat{n}, \vars{\T^\mu_i}) \rightarrow^\mu G.$$ 
By property (\ref{eq:relax}), we have that 
$$(\pat{\T^\mu_i}\cup\pat{n}, \vars{\T^\mu_i})\rightarrow_{k+1}^\mu G.$$ 
Hence, the algorithm rejects (as it does not accept in any iteration). 

For completeness, assume that $\mu\in\sem{\F}{G}$, i.e., 
$\mu\in\sem{\T_\ell}{G}$, for some $\ell\in\{1,\dots,m\}$. 
In particular, there is a (unique) subtree $\T$ of $\T_\ell$ such that 
$\mu$ is a homomorphism from $\pat{\T}$ to $G$ and, for every child $n$ of $\T$, 
$(\pat{\T}\cup \pat{n}, \vars{\T})\rightarrow^\mu G$ 
does not hold. 
Towards a contradiction suppose that the algorithm rejects the instance $(\F,G,\mu)$. 
Let $I\subseteq{\supp{\T}}$ be the set of indices $i$ such that, in the $i$-th iteration,  
the algorithm finds a subtree $\T^\mu_i$ of $\T_i$ such that $\mu$ is a homomorphism from $\pat{\T^\mu_i}$ to $G$. 
Observe that $I\neq \emptyset$ as $\ell\in I$.  
Since the algorithm rejects, we have that for every $i\in I$, there is a child $n_i$ of the subtree $\T^\mu_i$ such that 
$$(\pat{\T^\mu_i}\cup\pat{n_i}, \vars{\T^\mu_i})\rightarrow_{k+1}^\mu G.\qquad (\dagger)$$
Let $\Delta$ be the children assignment with $\dom{\Delta}=I$ such that $\Delta(i)=n_i$, for every $i\in I$. 

For readability, we let $X:= \vars{\T}$. We show that $\Delta$ is valid. By contradiction, suppose that there exists $j\in\supp{\T}\setminus I$ such that 
$$(\pat{\T^{\tsupp}(j)}, X)\rightarrow (S_\Delta,X).\qquad (\ddagger)$$
Note first that $\vars{\T^\mu_i}=X$, for every $i\in I$, and hence 
$$(\pat{\T^\mu_i}\cup\pat{n_i}, \vars{\T^\mu_i})=(\pat{\T^\mu_i}\cup\pat{n_i}, X).$$
Recall that $\rho_\Delta(i)$ is the renaming of $\pat{n_i}$ where variables from $\vars{n_i}\setminus X$ become fresh variables. 
We have then that, for every $i\in I$, $(\pat{\T^\mu_i}\cup\rho_\Delta(i), X)\rightarrow (\pat{\T^\mu_i}\cup\pat{n_i}, X)$, 
and by ($\dagger$) and Proposition \ref{prop:games}, item (1), $(\pat{\T^\mu_i}\cup\rho_\Delta(i), X)\rightarrow^\mu_{k+1} G$.

Now we can apply Proposition \ref{prop:games}, item (2) to the generalised t-graphs $\{(\pat{\T^\mu_i}\cup\rho_\Delta(i), X)\mid i\in I\}$ 
and obtain that  
$$(\bigcup_{i\in I}\pat{\T^\mu_i}\cup\rho_\Delta(i), X)\rightarrow^\mu_{k+1} G.$$
Recall that $S_\Delta=\pat{\T}\cup\bigcup_{i\in I} \rho_\Delta(i)$, and since $\ell\in I$ and $\T=\T^\mu_\ell$, 
we have $S_\Delta\subseteq \bigcup_{i\in I}\pat{\T^\mu_i}\cup\rho_\Delta(i)$. It follows that 
$$(S_\Delta, X)\rightarrow (\bigcup_{i\in I}\pat{\T^\mu_i}\cup\rho_\Delta(i), X).$$
By Proposition \ref{prop:games}, item (1), we have
$$(S_\Delta, X)\rightarrow^\mu_{k+1}G \qquad (*)$$
and by ($\ddagger$), it follows that $(\pat{\T^{\tsupp}(j)}, X\rightarrow^\mu_{k+1} G$.  
As $\vars{\T^{\tsupp}(j)}\setminus X=\emptyset$, 
we conclude by property (\ref{eq:empty}) that  $(\pat{\T^{\tsupp}(j)}, X)\rightarrow^\mu G$, i.e., $\mu$ is a homomorphism from $\pat{\T^{\tsupp}(j)}$ to $G$. 
Since $j\not\in I$, this is a contradiction with the definition of $I$. Thus $\Delta\in \VCA{\T}$.

Since $\dw{\F}\leq k$, $\GtG{\T}$ is $k$-dominated. In particular, it is the case that 
$\ctw{S_{\Delta'},X}$ $\leq$ $k$ and $(S_{\Delta'},X)$ $\rightarrow$ $(S_\Delta,X)$, for some $\Delta'\in\VCA{\T}$. 
Proposition \ref{prop:games}, item (1) and ($*$) implies $(S_{\Delta'}, X)\rightarrow^\mu_{k+1}G$.  
By Proposition \ref{prop:games-tw}, we have that $(S_{\Delta'}, X)\rightarrow^\mu G$. 
Since $\T^\tsupp(\ell)=\T$, we have that $(\pat{\T^\tsupp(\ell)},X)$ $\rightarrow$ $(S_{\Delta'}, X)$, and since $\Delta'$ is valid,  
it must be the case that $\ell\in \dom{\Delta'}$. Observe that 
$$(\pat{\T}\cup \pat{\Delta'(\ell)}, X)\rightarrow(S_{\Delta'},X)$$
as $S_{\Delta'}$ contains a copy of $\pat{\T}\cup \pat{\Delta'(\ell)}$, modulo renaming of variables in $\vars{\Delta'(\ell)}\setminus X$. 
By composition, we have 
$$(\pat{\T}\cup \pat{\Delta'(\ell)}, X)\rightarrow^\mu G.$$ 
Since $\Delta'(\ell)$ is a child of $\T$, this contradicts  the fact that $\mu\in\sem{\T_\ell}{G}$.
We conclude that the algorithm accepts the instance $(\F,G,\mu)$. 
\end{proof}

We remark that the classes of bounded domination width \emph{strictly} extend those that are locally tractable \cite{lete} (see also \cite{BPS15}), 
which are the most general tractable restrictions known so far. 
In our context, a class $\C$ is \emph{locally tractable} if there is a constant $k\geq 1$ such that for every 
$P\in\C$ with $\wdpf(P)=\F$, every wdPT $\T=(T,r,\lambda)\in \F$, and every node $n\in V(T)$ with $n\neq r$ and parent $n'$, it is the case that 
$$\ctw{\pat{n},\vars{n}\cap\vars{n'}} \leq k.$$

Observe that local tractability implies bounded domination but the converse does not hold in general. 
Indeed, it suffices to consider the class $\C=\{P_k\mid k\geq 2\}$ from Example \ref{ex:dom1}. 
As shown in this example, $\C$ has bounded domination width but due to node $n_{12}$ in $\T_1$ (see Figure \ref{fig:main}), $\C$ is not locally tractable.

\subsection{The case of UNION-free patterns}
\label{sec:union-free}

In this section, we show that for well-designed patterns using only AND and OPT, 
the notion of domination width boils down to a simpler notion of width called \emph{branch treewidth}. 
Recall that, in this case, well-designed patterns can be represented by pattern trees, instead of pattern forests. 

For a wdPT $\T=(T,r,\lambda)$ and $n\in V(T)$, we define the \emph{branch} $\B_n$ of $n$ to be the set of nodes in $V(T)$ appearing in the unique path in $T$ 
from the root $r$ to the parent of $n$. Note that $\B_r=\emptyset$. For $n\in V(T)$ with $n\neq r$, we define the t-graph $S_n^\tbr:=\pat{n}\cup\bigcup_{n'\in \B_n}\pat{n'}$ 
and the set of variables $X_n^\tbr:=\vars{\bigcup_{n'\in \B_n}\pat{n'}}$. Note that  $X_n^\tbr\subseteq\vars{S_n^\tbr}$. 

\begin{definition}[Branch treewidth]
Let $\T$ $=$ $(T,$ $r,$ $\lambda)$ be a wdPT. We define the \emph{branch treewidth} $\bw{\T}$ of $\T$ to be the minimum positive integer $k$ such that for all $n\in V(T)$ with $n\neq r$, 
it is the case that $\ctw{S_n^\tbr,X_n^\tbr}$ $\leq$ $k$. 

For a UNION-free well-designed graph pattern $P$, we define the  \emph{branch treewidth} of $P$ to be $\bw{P}:=\bw{\T}$, where $\T$ is the wdPT such that $\wdpf(P)=\{\T\}$. 
\end{definition}

As it turns out, branch treewidth and domination width coincide for UNION-free patterns. 

\begin{proposition}
\label{prop:branch}
For every UNION-free well-designed graph pattern $P$, we have that $\dw{P}=\bw{P}$.
\end{proposition}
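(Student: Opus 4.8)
The plan is to prove the two inequalities $\dw{P}\le\bw{P}$ and $\dw{P}\ge\bw{P}$ separately. Since $P$ is UNION-free, $\wdpf(P)=\{\T\}$ is a single pattern tree, so I only need to understand $\GtG{\T'}$ for subtrees $\T'$ of this one tree, and I would first record the shape of these sets. For a single-tree forest, $\supp{\T'}=\{1\}$ for every subtree $\T'$, and the validity condition defining $\VCA{\T'}$ is vacuous (there is no index in $\supp{\T'}\setminus\dom{\Delta}$); hence $\VCA{\T'}=\CA{\T'}$ is in bijection with the children of $\T'$, and $\GtG{\T'}=\{(\pat{\T'}\cup\pat{n},\vars{\T'})\mid n\text{ a child of }\T'\}$, up to renaming the fresh variables of each child. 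Two bookkeeping facts are used throughout: the subtree $\T[\B_n]$ induced by the branch of $n$ satisfies $\vars{\T[\B_n]}=X_n^\tbr$ and contributes exactly the generalised t-graph $(S_n^\tbr,X_n^\tbr)$; and, by the connectivity condition of wdPTs, whenever $n$ is a child of $\T'$ one has $\vars{n}\cap\vars{\T'}=\vars{n}\cap X_n^\tbr$, so the active (non-distinguished) variables of $n$ are the same in every $\GtG{\T'}$ containing $n$'s t-graph.

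The upper bound rests on a monotonicity lemma: if $\T''\subseteq\T'$ are subtrees both having $n$ as a child, then $\ctw{\pat{\T'}\cup\pat{n},\vars{\T'}}\le\ctw{\pat{\T''}\cup\pat{n},\vars{\T''}}$. To see this I would take a core retraction $e$ of the smaller generalised t-graph (which fixes $\vars{\T''}$) and extend it by the identity on $\vars{\T'}\setminus\vars{\T''}$; since every variable of $\vars{\T'}$ is then fixed, the extra triples of $\pat{\T'}$ map to themselves and the extension is an endomorphism of $(\pat{\T'}\cup\pat{n},\vars{\T'})$ whose image has the same active Gaifman graph as the smaller core. As a core embeds as a substructure into any endomorphic image and treewidth is monotone under subgraphs, the active Gaifman graph of the larger core is a subgraph of that of the smaller core, giving the inequality. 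Taking $\T''=\T[\B_n]$ yields $\ctw{\pat{\T'}\cup\pat{n},\vars{\T'}}\le\ctw{S_n^\tbr,X_n^\tbr}\le\bw{\T}$ for every child $n$ of every subtree; thus all members of $\GtG{\T'}$ have $\ctw$ at most $\bw{\T}$, so $\GtG{\T'}$ is trivially $\bw{\T}$-dominated and $\dw{P}\le\bw{P}$.

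For the lower bound I fix a node $n^*$ with $\ctw{S_{n^*}^\tbr,X_{n^*}^\tbr}=\bw{\T}$ and seek a subtree whose associated set is not $(\bw{\T}-1)$-dominated. The crucial ingredient is an absorbing lemma: if a child $m\ne n^*$ of a subtree $\T'$ (with $n^*$ a child) dominates $n^*$'s generalised t-graph, i.e.\ $(\pat{\T'}\cup\pat{m},\vars{\T'})\rightarrow(\pat{\T'}\cup\pat{n^*},\vars{\T'})$, then adding $m$ to $\T'$ does not change the $\ctw$ of $n^*$'s generalised t-graph. One inequality is the monotonicity lemma; for the other I would post-compose a core retraction of the larger t-graph with the dominating homomorphism (extended by the identity on $n^*$'s active variables) to obtain an endomorphism of the smaller t-graph that collapses $n^*$'s active variables at least as much, and conclude again by subgraph-monotonicity. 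Finally I would choose $\T^*$ to be a subtree with the maximum number of nodes among those that contain $\T[\B_{n^*}]$, keep $n^*$ as a child, and satisfy $\ctw{\pat{\T^*}\cup\pat{n^*},\vars{\T^*}}=\bw{\T}$ (such subtrees exist since $\T[\B_{n^*}]$ qualifies). Maximality together with the absorbing lemma forces that no other child of $\T^*$ dominates $n^*$'s generalised t-graph; hence that t-graph, whose $\ctw$ equals $\bw{\T}$, is dominated by no element of smaller $\ctw$, so $\GtG{\T^*}$ fails to be $(\bw{\T}-1)$-dominated and $\dw{P}\ge\bw{P}$.

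I expect the two $\ctw$-comparison lemmas to be the main obstacle. Both reduce to the same two structural facts — that a core embeds as a substructure into every endomorphic image, and that the treewidth of the Gaifman graph restricted to the active variables is subgraph-monotone — but applying them requires building the right homomorphisms (extension by the identity in the monotonicity lemma, post-composition with the dominating map in the absorbing lemma) and checking that these genuinely fix the distinguished variables while collapsing the active ones at least as much as the reference retraction. The examples where adding off-branch context strictly lowers $\ctw$ (a clique stays rigid, whereas a directed cycle may fold onto a self-loop supplied by a sibling) show that neither the branch subtree nor the isolating subtree works uniformly, which is precisely why the maximal-subtree selection in the lower bound is needed.
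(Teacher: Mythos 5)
Your reduction of $\GtG{\T'}$ to the family $\{(\pat{\T'}\cup\pat{n},\vars{\T'})\mid n \text{ a child of } \T'\}$ and your proof of $\dw{P}\le\bw{P}$ via the monotonicity lemma are correct and essentially the paper's argument (the paper applies the same core-extension idea directly with $\T''=\T[\B_n]$). The gap is in the lower bound: the absorbing lemma is false as you state it. Consider a wdPT whose root $n_0$ has $\pat{n_0}=\{(?x,p,?x)\}$ and two children $m,n^*$ of $n_0$ with $\pat{m}=\{(?x,q,?y_1)\}\cup K_k(?y_1,\dots,?y_k)$ and $\pat{n^*}=\{(?x,q,?o_1)\}\cup K_k(?o_1,\dots,?o_k)$. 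Then $m$ dominates $n^*$ over $\T'=\T[n_0]$ (map $?y_i\mapsto ?o_i$) and $\ctw{\pat{n_0}\cup\pat{n^*},\{?x\}}=k-1$; but after adding $m$ the variables $?y_1,\dots,?y_k$ become distinguished, $?o_i\mapsto ?y_i$ is a retraction, and $\ctw{\pat{n_0}\cup\pat{m}\cup\pat{n^*},\{?x,?y_1,\dots,?y_k\}}=1$. Your proof sketch breaks at exactly this point: the core retraction of the larger t-graph may send the active variables of $n^*$ into $\vars{m}\setminus\vars{\T'}$, and post-composing with the dominating homomorphism sends them back into active territory with no treewidth control --- in the example the composite is the identity on $\vars{\pat{n_0}\cup\pat{n^*}}$.

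What saves the strategy (and is what the paper proves) is that the sibling you actually need to absorb comes from a $(\bw{\T}-1)$-dominating set, so you may add the hypothesis $\ctw{\pat{\T'}\cup\pat{m},\vars{\T'}}\le\bw{\T}-1$ --- which fails in the counterexample above --- and the correct conclusion is only the one-sided bound that the enlarged ctw stays above $\bw{\T}-1$ (combined with your monotonicity lemma this does give exact preservation in your maximal-subtree setup, so the rest of your plan, which is the paper's iteration in disguise, goes through). Proving that bound needs a different construction: assuming the enlarged ctw were $\le\bw{\T}-1$, the paper forms $S^*$ from $\pat{\T'}$ together with the active parts of \emph{both} cores, of $(\pat{\T'}\cup\pat{m},\vars{\T'})$ and of $(\pat{\T'}\cup\pat{m}\cup\pat{n^*},\vars{\T'}\cup\vars{m})$; the Gaifman graph of $(S^*,\vars{\T'})$ is the disjoint union of the two cores' Gaifman graphs, hence of treewidth $\le\bw{\T}-1$, and $S^*$ is hom-equivalent over $\vars{\T'}$ to $\pat{\T'}\cup\pat{n^*}$ (the domination supplies one direction), contradicting $\ctw{\pat{\T'}\cup\pat{n^*},\vars{\T'}}=\bw{\T}$. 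So the key lemma needs both the extra hypothesis and this two-core argument; your single-composition proof cannot be repaired as written.
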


\begin{proof}
Assume that $\wdpf(P)=\{\T\}$, where $\T=(T,r,\lambda)$ is a wdPT. 
We start by proving that $\dw{\T}\leq \bw{\T}$. Assume that $\bw{P}=k$ and let $\T'$ be a subtree of $\T$. 
We need to prove that $\GtG{\T'}$ is $k$-dominated. We shall prove something stronger: for every $(S_\Delta,\vars{\T'})$ $\in$ $\GtG{\T'}$, 
we have $\ctw{S_\Delta,\vars{\T'}}\leq k$.

Let $(S_\Delta,\vars{\T'})$ $\in$ $\GtG{\T'}$, where $\Delta\in\VCA{\T'}$. 
Observe that $S_\Delta$ coincides with $S':=\pat{\T'}\cup\pat{n}$ modulo renaming of variables in $\vars{n}\setminus\vars{\T'}$, 
where $n$ is a child of $\T'$. Thus 
$$\ctw{S_\Delta,\vars{\T'}}=\ctw{S',\vars{\T'}}.$$
Note that $n\neq r$. 
Let $(C,X_n^\tbr)$ be the core of $(S_n^\tbr,X_n^\tbr)$. 
In particular, $(C,X_n^\tbr)$ is a subgraph of $(S_n^\tbr,X_n^\tbr)$ and $(S_n^\tbr,X_n^\tbr)\rightarrow (C,X_n^\tbr)$. 
As $\B_n\subseteq V(T')$, where $\T'=(T',r,\lambda')$, we have that $(\pat{\T'}\cup C, \vars{\T'})$ is a subgraph of $(S',\vars{\T'})$  and 
$$(S',\vars{\T'})\rightarrow(\pat{\T'}\cup C, \vars{\T'}).$$
Then the core of $(S',\vars{\T'})$ is a subgraph of $(\pat{\T'}\cup C, \vars{\T'})$. 
As treewidth does not increase by taking subgraphs, and using the fact that 
$$\tw{\pat{\T'}\cup C, \vars{\T'}}=\tw{C,X_n^\tbr}$$ 
as their Gaifman graphs coincide, we have that 
$$\ctw{S_\Delta,\vars{\T'}}=\ctw{S',\vars{\T'}}\leq \tw{C,X_n^\tbr}.$$
Since $\bw{\T}=k$ and $n\neq r$, we have that $\tw{C,X_n^\tbr}$ $\leq$ $k$, and hence, $\ctw{S_\Delta,\vars{\T'}}$ $\leq$ $k$ as required. 

We now prove that $\bw{\T}\leq \dw{\T}$. Let $\dw{\T}=k$. 
By contradiction, suppose that there exists $n\in V(T)$ with $n\neq r$ such that $\ctw{S_n^\tbr,X_n^\tbr}$ $>$ $k$. 
Let $\T'$ be the subtree of $\T$ corresponding to $\B_n$. In particular, $n$ is a child of $\T'$ and 
$$(\pat{\T'}\cup\pat{n},\vars{\T'})=(S_n^\tbr,X_n^\tbr).$$ 
For readability, we let $S:=S_n^\tbr$ and $X':= X_n^\tbr=\vars{\T'}$. 
Since $\GtG{\T'}$ is $k$-dominated and $\ctw{S,X'}$ $>$ $k$, there exists a child $n'$ of $\T'$ with $n'\neq n$ such that 
$$(S_{n'},X')\rightarrow(S,X')\quad (\dagger)$$
where $S_{n'}:=\pat{\T'}\cup\pat{n'}$ and $\ctw{S_{n'},X'}\leq k$. 
Let $\T''$ be the subtree of $\T$ obtained from $\T'$ by adding the child $n'$. 
Let $S_{nn'}:=\pat{\T''}\cup\pat{n}$ and $X'':=\vars{\T''}$. 
Below we show that 
$$\ctw{S_{nn'},X''}>k.\quad (*)$$

Towards a contradiction, assume that $\ctw{S_{nn'},X''}\leq k$. 
We shall show that $\ctw{S,X'}\leq k$, which is a contradiction. 
It is a known fact (see \cite[Theorem~12]{DKV}) that $\ctw{S,X'}\leq k$ if and only if ($\ddagger$) there exists $(S^*,X')$ such that 
\begin{itemize}
\item $\tw{S^*,X'}\leq k$, and 
\item $(S,X')$ $\rightarrow$ $(S^*,X')$ and $(S^*,X')$ $\rightarrow$ $(S,X')$. In this case, we write $(S,X')$ $\leftrightarrows$ $(S^*,X')$. 
\end{itemize}
Also, observe that ($\dagger$) implies that $(S_{nn'},X')\rightarrow (S,X')$. 
As $S\subseteq S_{nn'}$, we have $(S,X')$ $\rightarrow$ $(S_{nn'},X')$, and hence
$(S_{nn'},X')\leftrightarrows(S,X')$.  
By transitivity of $\rightarrow$, 
it suffices to show ($\ddagger$) with respect to $S_{nn'}$ instead of $S$. 

We have $\ctw{S_{n'},X'}\leq k$ and $\ctw{S_{nn'},X''}$ $\leq$ $k$, by hypothesis. 
Hence $\tw{C_{n'},X'}\leq k$ and $\tw{C_{nn'},X''}\leq k$, where $(C_{n'},X')$ and $(C_{nn'},X'')$ are the cores of $(S_{n'},X')$ and $(S_{nn'},X'')$, respectively. 
We define the following generalised t-graphs:
\begin{align*}
 D_{n'}& := C_{n'}\setminus \pat{\T'}, \\
 D_{nn'}&:= C_{nn'}\setminus \pat{\T''},\\
 S^*&:= \pat{\T'} \cup D_{n'}\cup D_{nn'}. 
\end{align*}
Note that $\vars{D_{n'}}\cap\vars{D_{nn'}}\subseteq X'$. 
In particular, the Gaifman graph of $(S^*,X')$
 is the disjoint union of those of $(C_{n'},X')$ and $(C_{nn'},X'')$, and then $\tw{S^*,X'}\leq k$. 
It suffices to show that $(S_{nn'},X')$ $\leftrightarrows$ $(S^*,X')$. 

Observe first that  $(S^*,X')$ $\rightarrow$ $(S_{nn'},X')$ as $S^*\subseteq S_{nn'}$. 
For the other direction, observe that $(S_{nn'},X'')$ $\rightarrow$ $(C_{nn'},X'')$ by definition of cores. 
Then $(S_{nn'},X')$ $\rightarrow$ $(C_{nn'},X')$. 
Also by definition of cores, we have that $(S_{n'},X')$ $\rightarrow$ $(C_{n'},X')$ via a homomorphism $h$. 
Hence, by construction of $S^*$, the function $g:\vars{C_{nn'}}\rightarrow \bI\cup\bV$ such that $g(?x)=h(?x)$, for $?x\in\vars{n'}\setminus X'$, and $g(?x)=?x$ otherwise, 
is a homomorphism witnessing $(C_{nn'},X')\rightarrow (S^*,X')$. By transitivity, $(S_{nn'},X')\rightarrow (S^*,X')$ as required. 
Thus claim (*) holds.  

As $\GtG{\hat\T}$ is $k$-dominated for every subtree $\hat\T$ of $\T$, 
we can iterate the previous argument until we find a subtree $\T^*$ of $\T$ such that $n$ is its only child and $\ctw{\pat{\T^*}\cup\pat{n},\vars{\T^*}}>k$. 
It follows that $\GtG{\T^*}$ cannot be $k$-dominated; a contradiction. 
\end{proof}

Proposition \ref{prop:branch} tells us that if $\T$ is a wdPT with $\dw{\T}$ $\leq$ $k$, then for every subtree $\T'$ of $\T$, the set $\GtG{\T'}$ is 
$k$-dominated due to the trivial reason: all elements of $\GtG{\T'}$ are already of ctw $\leq k$.  
Observe that this is not the case for arbitrary patterns. 
Indeed, as Example \ref{ex:dom1} shows, $\dw{\F_k}=1$ but the set $\GtG{\T_1[r_1]}$ is not trivially $1$-dominated as $\ctw{S_{\Delta_2},\{?x,?y\}}=k-1$. 

The results in this paper (see Theorem \ref{theo:main} and Corollary \ref{coro:main-branch} in the next section) 
show that domination width (and then branch treewidth for UNION-free patterns) captures tractability for well-designed patterns. 
Therefore, polynomial-time solvability of arbitrary patterns and UNION-free patterns is based on two different principles: for arbitrary patterns is based on $k$-domination, while 
for UNION-free patterns \emph{branch tractability} suffices. 
As a matter of fact, this striking difference between the general and UNION-free case is also present in other contexts: 
for instance, \emph{containment} of UNION-free patterns can be characterised in very simple terms, 
while the general case requires more involved characterisations 
(see e.g. \cite[Theorem~3.7]{PS14} and \cite[Lemma~1]{KRRV15}).  

Finally, observe that bounded branch treewidth implies local tractability, but the converse is not true in general. 
Hence, we obtain new tractable classes even in the UNION-free case. 
To see this, consider for instance the class $\C=\{P'_k\mid k\geq 2\}$, where $\wdpf(P'_k)=\{\T'_k\}$, where $\T'_k=(T,r,\lambda)$ is a wdPT such that
\begin{itemize}
\item $T=(\{r,n_k\}, \{\{r,n_k\}\})$, i.e., $T$ is the tree containing two nodes. 
\item $\lambda(r)=\{(?y,r,?y)\}$ and $$\lambda(n_k)=\{(?y,r,?o_1)\}\cup K_k(?o_1,\dots,?o_k)$$ where $K_k(?o_1,\dots,?o_k)$ is defined as in Example \ref{ex:cores}. 
\end{itemize}
We have that $\C$ has bounded branch treewidth as $\bw{\T'_k}$ $=$ $1$, for every $k\geq 2$. Indeed, 
the core of $(S_{n_k}^\tbr, X_{n_k}^\tbr)$ is simply $(\{(?y,r,?y)\}, \{?y\})$. On the other hand, $\C$ is not locally tractable as  
$\ctw{\pat{n_k},\{?y\}}=k-1$.

\section{A matching hardness result} 
\label{sec:hardness}

We start by giving some basic definitions from parameterised complexity theory as our hardness result relies on it 
(we refer the reader to \cite{FG06} for more details).   

A \emph{parameterised problem} $(\Pi,\kappa)$ is a classical decision problem $\Pi$ equipped with a \emph{parameterisation} $\kappa$ 
that maps instances of $\Pi$ to natural numbers. The class \emph{FPT} contains all parameterised problems $(\Pi,\kappa)$ that are \emph{fixed-parameter tractable}, 
that is, that can be solved in time $f(\kappa(x))\cdot |x|^{O(1)}$, where $|x|$ denotes the size of the instance and $f:\mathbb{N}\rightarrow\mathbb{N}$ is a computable function. 
An \emph{fpt-reduction} from $(\Pi,\kappa)$ to $(\Pi',\kappa')$ is a function $r$ mapping 
instances of $\Pi$ to instances of $\Pi'$ such that (i) for all instance $x$ of $\Pi$, 
 we have $x\in \Pi$ if and only if $r(x)\in \Pi'$, (ii) $r$ can be computed in time $f(\kappa(x))\cdot|x|^{O(1)}$ for some computable function $f:\mathbb{N}\rightarrow\mathbb{N}$, 
 and (iii) there is a computable function $g:\mathbb{N}\mapsto\mathbb{N}$ such that for all instances
$x$ of $\Pi$, we have $\kappa'(r(x))\leq g(\kappa(x))$. 

The class W[1] can be seen as an analogue of NP in parameterised
complexity theory (for a precise definition, see \cite{FG06}). 
Proving W[1]-hardness (under fpt-reductions) is a strong indication that the problem is
not in FPT as it is believed that FPT $\neq$ W[1]. 
A canonical W[1]-complete problem is $p$-{\sc CLIQUE}, that is, the {\sc CLIQUE} problem parameterised by the size of the clique. 
Recall that the {\sc CLIQUE} problem asks, given an undirected graph $H$ and a positive integer $k$, whether 
$H$ contains a clique of size $k$. 

Given a class $\C$ of well-designed graph patterns, we denote by $p$-$\wdeval(\C)$, the problem $\wdeval(\C)$ parameterised by the size $|P|$ of the input well-designed graph pattern $P$. 
We denote by {co-}$\wdeval(\C)$ the complement of $\wdeval(\C)$, i.e., the problem of checking $\mu\not\in\sem{P}{G}$ for a given well-designed pattern $P$, an RDF graph $G$ and 
a mapping $\mu$. Similarly, we denote by $p$-co-$\wdeval(\C)$ the complement of $p$-$\wdeval(\C)$. 

\subsection{Hardness result and main characterisation theorem}

Our main hardness result is as follows. 

\begin{theorem}[Main hardness]
\label{theo:main-hardness}
Let $\C$ be a recursively enumerable class of well-designed graph patterns of unbounded domination width. 
Then p-co-$\wdeval(\C)$ is $\Wone$-hard. 
\end{theorem}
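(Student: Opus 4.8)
The plan is to give an fpt-reduction from $p$-{\sc CLIQUE} to $p$-co-$\wdeval(\C)$, adapting Grohe's strategy \cite{gro07} to the homomorphism characterisation of non-membership developed in Section~\ref{sec:tractability}. Writing $\F=\wdpf(P)$ and $X=\vars{\T}$ for a subtree $\T$ of $\F$, recall that for a mapping $\mu$ with $\dom{\mu}=X$ one has $\mu\not\in\sem{\F}{G}$ exactly when $(S_\Delta,X)\rightarrow^\mu G$ for some $(S_\Delta,X)\in\GtG{\T}$. Hence co-$\wdeval(\C)$ contains, as a sub-problem, the homomorphism-existence problem $(S_\Delta,X)\rightarrow^\mu G$ for the generalised t-graphs attached to subtrees of patterns in $\C$; and since this is precisely CQ evaluation with free variables $X$ and constants given by the IRIs of $S_\Delta$, Grohe's hardness for cores of unbounded treewidth is the natural engine. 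The two adaptations required are exactly Lemma~\ref{lemma:groheB} (Grohe's construction in the presence of distinguished elements and constants) and Lemma~\ref{lemma:large-dw} (the structural consequence of large domination width).

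To select the hard instances I would use the two hypotheses on $\C$ together. As $\C$ is recursively enumerable and has unbounded domination width, for every $k$ one can effectively compute, by enumerating $\C$ and evaluating the computable quantity $\dw{\cdot}$, a pattern $P_k\in\C$ with $\dw{P_k}>h(k)$ for a threshold $h$ fixed below. By definition of domination width, $P_k$ then has a subtree $\T$ for which $\GtG{\T}$ is not $h(k)$-dominated. This is the hypothesis of Lemma~\ref{lemma:large-dw}, which I would invoke to produce a valid children assignment $\Delta\in\VCA{\T}$ such that $\ctw{S_\Delta,X}>k$ and, crucially, $(S_\Delta,X)$ is \emph{not} dominated by any low-ctw member of $\GtG{\T}$: no $(S',X)\in\GtG{\T}$ with $\ctw{S',X}\le k$ satisfies $(S',X)\rightarrow(S_\Delta,X)$. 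This non-domination property is the heart of the argument, as it is what prevents the $\text{ctw}\le k$ members of $\GtG{\T}$ from short-circuiting the reduction. Given a clique instance $(H,k)$, the reduction then computes $P_k$, $\T$, $X$ and $(S_\Delta,X)$, and applies Lemma~\ref{lemma:groheB} to the core of $(S_\Delta,X)$ to build, in time $f(k)\cdot|H|^{O(1)}$, an RDF graph $G$ and a mapping $\mu$ with $\dom{\mu}=X$ for which $H$ has a $k$-clique iff $(S_\Delta,X)\rightarrow^\mu G$; the distinguished-elements/constants version is needed so that $\mu$ can pin the variables of $X$ and the IRIs of $S_\Delta$ to their intended targets. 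The output is $(P_k,G,\mu)$. Since $|P_k|$ depends only on $k$, the target parameter is bounded by a computable function of $k$, so this is a legitimate fpt-reduction; here $h(k)$ is chosen just large enough that Lemma~\ref{lemma:large-dw} yields a core of treewidth beyond what Grohe's construction requires for parameter $k$.

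It remains to establish correctness, namely that $H$ has a $k$-clique if and only if $\mu\not\in\sem{\F}{G}$. The forward direction is immediate from the characterisation above, since a clique gives $(S_\Delta,X)\rightarrow^\mu G$ with $\Delta$ valid, hence $\mu\not\in\sem{\F}{G}$. The converse is the main obstacle: in the absence of a $k$-clique I must show that \emph{no} member of $\GtG{\T}$ maps into $G$, so that $\mu\in\sem{\F}{G}$. I would handle this by engineering $G$ so that in the no-clique regime every homomorphism $(S',X)\rightarrow^\mu G$ factors through $(S_\Delta,X)$: a hom from a $\text{ctw}\le k$ member would then force $(S',X)\rightarrow(S_\Delta,X)$, contradicting non-domination, while a hom from any remaining member would already encode a clique. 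Making this factorisation precise while simultaneously controlling all subtrees of $\F$ (not only the chosen $\T$) and ensuring that $\mu$ remains a potential solution of exactly the intended trees is the delicate bookkeeping that Lemmas~\ref{lemma:groheB} and~\ref{lemma:large-dw} are designed to absorb, and is where I expect the real work to lie.
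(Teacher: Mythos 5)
Your overall architecture matches the paper's: an fpt-reduction from $p$-{\sc CLIQUE} that enumerates $\C$ to find a pattern of domination width at least ${\tt w}({k\choose 2})$, extracts a subtree $\T$ and a distinguished $(S_\Delta,X)\in\GtG{\T}$ with $X=\vars{\T}$ via Lemma~\ref{lemma:large-dw}, builds $(B,X)$ via Lemma~\ref{lemma:groheB}, and takes $G$ to be $B$ with its variables frozen into IRIs and $\mu$ the induced identity on $X$. However, there is a genuine gap in your converse direction, and it stems from mis-stating what Lemma~\ref{lemma:large-dw} delivers. You extract from it only that no $(S',X)\in\GtG{\T}$ with $\ctw{S',X}\leq k$ maps into $(S_\Delta,X)$, and you then split the converse into two cases: low-ctw members are excluded by this non-domination, while ``a hom from any remaining member would already encode a clique.'' That last step does not follow: item (3) of Lemma~\ref{lemma:groheB} ties $k$-cliques to homomorphisms from the \emph{specific} $(S_\Delta,X)$ into $(B,X)$; a homomorphism from some other high-ctw member $(S_{\Delta'},X)$ of $\GtG{\T}$ into $(B,X)$ carries no clique information by itself. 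What Lemma~\ref{lemma:large-dw} actually provides---and what you need---is the stronger hom-minimality property over \emph{all} of $\GtG{\T}$: whenever $(S',X)\rightarrow(S_\Delta,X)$ for $(S',X)\in\GtG{\T}$, then also $(S_\Delta,X)\rightarrow(S',X)$. With that, the converse closes by a single chain with no case split: $\mu\not\in\sem{\F}{G}$ yields a valid $\Delta'$ with $(S_{\Delta'},X)\rightarrow^{\mu}G$, hence $(S_{\Delta'},X)\rightarrow(B,X)$ by unfreezing, hence $(S_{\Delta'},X)\rightarrow(S_\Delta,X)$ since $(B,X)\rightarrow(S_\Delta,X)$ by item (2) of Lemma~\ref{lemma:groheB}, hence $(S_\Delta,X)\rightarrow(S_{\Delta'},X)\rightarrow(B,X)$ by hom-minimality, and item (3) gives the clique. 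No additional ``engineering of $G$ so that homomorphisms factor through $(S_\Delta,X)$'' is required beyond what $(B,X)\rightarrow(S_\Delta,X)$ already supplies.

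Two smaller points. First, your opening characterisation---$\mu\not\in\sem{\F}{G}$ exactly when some member of $\GtG{\T}$ maps into $G$ via $\mu$---silently assumes that $\mu$ is a potential solution of at least one tree; item (1) of Lemma~\ref{lemma:groheB} exists precisely to guarantee that $\mu$ is a homomorphism from $\pat{\T}$ to $G$, so that the witnessing children assignment has non-empty domain. Second, the forward direction is not ``immediate'': for each $\ell$ whose tree $\T_\ell$ admits a subtree matched by $\mu$, you must first show $\ell\in\dom{\Delta}$ (this uses validity of $\Delta$ together with $(\pat{\T^{\tsupp}(\ell)},X)\rightarrow(B,X)\rightarrow(S_\Delta,X)$) before the clique-induced homomorphism $(S_\Delta,X)\rightarrow(B,X)$ hands you the extendable child $\Delta(\ell)$ for that particular tree.
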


We provide a proof of Theorem \ref{theo:main-hardness} in the next section. 
We now explain how Theorem \ref{theo:main-tractab} and \ref{theo:main-hardness} imply the main characterisation result 
of this paper. 

\begin{theorem}[Main]
\label{theo:main}
Assume FPT $\neq$ W[1]. Let $\C$ be a recursively enumerable\footnote{As in \cite{gro07}, we can remove the assumption of $\C$ being recursively enumerable by assuming a stronger assumption than FPT $\neq$ W[1] involving non-uniform complexity classes.}class of well-designed graph patterns. Then, the following are equivalent: 
\begin{enumerate}
\item $\wdeval(\C)$ is in PTIME.
\item p-$\wdeval(\C)$ is in FPT. 
\item $\C$ has bounded domination width. 
\end{enumerate}
\end{theorem}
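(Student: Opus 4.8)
The plan is to derive Theorem~\ref{theo:main} purely as a logical consequence of Theorem~\ref{theo:main-tractab} and Theorem~\ref{theo:main-hardness}, together with the standing assumption $\text{FPT}\neq\Wone$. I would establish the cycle of implications $(3)\Rightarrow(1)\Rightarrow(2)\Rightarrow(3)$, which suffices to prove all three statements equivalent.

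First, the implication $(3)\Rightarrow(1)$ is immediate: if $\C$ has bounded domination width, then by Theorem~\ref{theo:main-tractab} the problem $\wdeval(\C)$ is in $\ptime$. Second, $(1)\Rightarrow(2)$ is a standard observation from parameterised complexity: any problem solvable in polynomial time is trivially fixed-parameter tractable, since a running time of $|x|^{O(1)}$ is a special case of $f(\kappa(x))\cdot|x|^{O(1)}$ with $f$ constant. Hence $\wdeval(\C)\in\ptime$ implies $p\text{-}\wdeval(\C)\in\text{FPT}$.

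The interesting direction is $(2)\Rightarrow(3)$, which I would prove by contraposition using Theorem~\ref{theo:main-hardness}. Suppose $\C$ does \emph{not} have bounded domination width, i.e.\ $\C$ has unbounded domination width. Since $\C$ is recursively enumerable, Theorem~\ref{theo:main-hardness} applies and tells us that $p\text{-}\text{co-}\wdeval(\C)$ is $\Wone$-hard. The key step here is to transfer this hardness of the complement problem to the problem $p\text{-}\wdeval(\C)$ itself. This follows because FPT is closed under complementation: the complement of any fixed-parameter tractable problem is again fixed-parameter tractable (one simply flips the accept/reject decision of the fpt-algorithm, preserving the running-time bound $f(\kappa(x))\cdot|x|^{O(1)}$). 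Consequently, $p\text{-}\wdeval(\C)\in\text{FPT}$ would imply $p\text{-}\text{co-}\wdeval(\C)\in\text{FPT}$. But a $\Wone$-hard problem lies in FPT only if $\text{FPT}=\Wone$, contradicting our assumption. Therefore $p\text{-}\wdeval(\C)\notin\text{FPT}$, establishing the contrapositive $\neg(3)\Rightarrow\neg(2)$, equivalently $(2)\Rightarrow(3)$.

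I expect the main obstacle to be largely bookkeeping rather than mathematical depth, since the heavy lifting is done by the two previously established theorems. The one point requiring care is the interplay between a $\Wone$-hardness result stated for the \emph{complement} problem $p\text{-}\text{co-}\wdeval(\C)$ and the \emph{membership} claim for $p\text{-}\wdeval(\C)$; one must invoke closure of FPT under complement explicitly, and note that $\Wone$-hardness under fpt-reductions is incompatible with membership in FPT precisely under the hypothesis $\text{FPT}\neq\Wone$. Combining the three implications closes the cycle and yields the equivalence of $(1)$, $(2)$ and $(3)$.
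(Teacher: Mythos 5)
Your proposal is correct and follows essentially the same route as the paper: the same cycle $(1)\Rightarrow(2)\Rightarrow(3)\Rightarrow(1)$, with $(2)\Rightarrow(3)$ obtained by combining closure of FPT under complementation with Theorem~\ref{theo:main-hardness} and the assumption $\text{FPT}\neq\Wone$. The only cosmetic difference is that you phrase $(2)\Rightarrow(3)$ explicitly as a contrapositive, whereas the paper argues it directly by contradiction.
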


\begin{proof}
(1)$\Rightarrow$(2) is immediate. For (2)$\Rightarrow$(3), if $p$-$\wdeval(\C)$ is in FPT, then $p$-co-$\wdeval(\C)$ also is. 
Then, by our assumption FPT $\neq$ W[1], $p$-co-$\wdeval(\C)$ cannot be $\Wone$-hard. 
Therefore, $\C$ has bounded domination width, otherwise we reach a contradiction by Theorem \ref{theo:main-hardness}. 
The implication (3)$\Rightarrow$(1) follows directly from Theorem \ref{theo:main-tractab}.
\end{proof}
As a corollary of Proposition \ref{prop:branch}, we have the following. 
\begin{corollary}
\label{coro:main-branch}
Assume FPT $\neq$ W[1]. Let $\C$ be a recursively enumerable class of UNION-free well-designed graph patterns. Then, the following are equivalent: 
\begin{enumerate}
\item $\wdeval(\C)$ is in PTIME.
\item p-$\wdeval(\C)$ is in FPT. 
\item $\C$ has bounded branch treewidth. 
\end{enumerate}
\end{corollary}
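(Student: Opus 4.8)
The plan is to read off the corollary directly from the characterisation already established in Theorem \ref{theo:main}, using Proposition \ref{prop:branch} to translate the width condition; no new machinery is needed. First I would note that every UNION-free well-designed graph pattern is in particular a well-designed graph pattern (it is the degenerate case $m=1$ of the top-level UNION-normal form), so a recursively enumerable class $\C$ of UNION-free patterns is also a recursively enumerable class of well-designed patterns. Consequently Theorem \ref{theo:main} applies to $\C$ verbatim under the hypothesis FPT $\neq$ W[1], yielding the equivalence of items (1) and (2) with the condition ``$\C$ has bounded domination width''.

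It then remains to verify that, for a class of UNION-free patterns, bounded domination width and bounded branch treewidth are literally the same condition; this is exactly where Proposition \ref{prop:branch} enters. That proposition gives $\dw{P}=\bw{P}$ for every UNION-free $P$, so the two parameters agree on each member of $\C$. Hence there is a universal constant $k\geq 1$ with $\dw{P}\leq k$ for all $P\in\C$ if and only if there is a universal constant $k\geq 1$ with $\bw{P}\leq k$ for all $P\in\C$; by definition these are precisely the statements that $\C$ has bounded domination width and that $\C$ has bounded branch treewidth, respectively. Substituting this equivalence for condition (3) into the chain supplied by Theorem \ref{theo:main} closes the cycle (1) $\Leftrightarrow$ (2) $\Leftrightarrow$ (3) and finishes the argument.

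I do not anticipate any genuine obstacle: the proof is a substitution of an equivalent hypothesis into a theorem that has already been proved. The only point that warrants a brief check is that the ``bounded'' quantifier --- a single universal constant valid across the entire class --- is preserved when passing from $\dw{\cdot}$ to $\bw{\cdot}$, and this is immediate from the pointwise equality $\dw{P}=\bw{P}$ of Proposition \ref{prop:branch}.
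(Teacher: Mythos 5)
Your proposal is correct and matches the paper's intent exactly: the paper derives this corollary by combining Theorem \ref{theo:main} with the pointwise equality $\dw{P}=\bw{P}$ from Proposition \ref{prop:branch}, which is precisely your argument. No issues.
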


\subsection{Proof of Theorem \ref{theo:main-hardness}}
\label{sec:proof-hardness}

We follow a similar strategy of the classical result by Grohe \cite{gro07} that shows $\Wone$-hardness for 
evaluating a class $\C$ of CQs over schemas of \emph{bounded arity} whose cores have unbounded treewidth. 
As in \cite{gro07}, we exhibit an fpt-reduction from $p$-CLIQUE to  $p$-co-$\wdeval(\C)$ exploiting 
the Excluded Grid Theorem \cite{Robertson86:excluding} that states that there exists a function ${\tt w}:\mathbb{N}\to \mathbb{N}$ 
such that for every $k\geq 1$, the \emph{$(k\times k)$-grid} is a \emph{minor} of every graph of treewidth at least ${\tt w}(k)$ (see \cite{diestel} for technical details). 
Throughout this section, we use ${\tt w}$ to denote such a function. 

The first ingredient in our proof is the following variant of the main construction from \cite{gro07} to take distinguished elements into account. (See the appendix for a proof.)

\begin{lemma} 
\label{lemma:groheB}
Let $k\geq 2$ and $H$ be an undirected graph. 
Let $(S,X)$ be a generalised t-graph with $\ctw{S,X}\geq {\tt w}({k\choose 2})$. 
Then there is a generalised t-graph $(B,X)$ such that
\begin{enumerate}
\item if $t\in S$ and $\vars{t}\subseteq X$, then $t\in B$. 
\item $(B,X)\rightarrow (S,X)$. 
\item $H$ contains a clique of size $k$ iff $(S,X)\rightarrow(B,X)$. 
\item $(B,X)$ can be computed in time $f(k,|(S,X)|)\cdot |H|^{O(1)}$, where $f:\mathbb{N}\times\mathbb{N}\rightarrow\mathbb{N}$ is a 
computable function. 
\end{enumerate}
\end{lemma}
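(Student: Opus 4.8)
The plan is to adapt Grohe's grid-based construction from \cite{gro07} so that it respects the distinguished variables in $X$ and the constants (IRIs) appearing in $S$. The starting point is the core $(C,X)$ of $(S,X)$; since $\ctw{S,X}=\tw{C,X}\geq {\tt w}({k \choose 2})$, the Excluded Grid Theorem guarantees that the $({k\choose 2}\times {k\choose 2})$-grid is a minor of the Gaifman graph $G(C,X)$. First I would extract from this grid minor a system of ${k \choose 2}$ vertex-disjoint ``horizontal'' and ${k \choose 2}$ vertex-disjoint ``vertical'' connected pieces (branch sets) of $C$, indexed so that each pair of distinct graph-vertices $\{a,b\}$ of a prospective $k$-clique corresponds to one grid cell; this is exactly the combinatorial core of Grohe's encoding of CLIQUE into a homomorphism problem.

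Next I would build $(B,X)$ as a carefully chosen homomorphic preimage-cum-image of $(S,X)$ over the vertex set of $H$. Concretely, I would take $\pat{\T}$-style ``frozen'' copies of those triples of $S$ that are fully supported on $X$ (to force condition (1)), and on the remaining non-distinguished part I would replace each variable of $C$ by pairs/tuples ranging over $V(H)$ in the manner dictated by the grid structure, wiring the horizontal and vertical paths so that a homomorphism $(S,X)\to(B,X)$ exists precisely when the chosen vertices form a $k$-clique in $H$. The retraction $(B,X)\to(S,X)$ of condition (2) comes for free because $B$ is assembled from labelled copies of (the core of) $S$: projecting every $V(H)$-indexed copy back to its $S$-variable is a homomorphism fixing $X$ pointwise, and composing with $C\to S$ handles the core-versus-$S$ discrepancy. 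The equivalence in condition (3) is the heart of the argument: the forward direction uses a $k$-clique to define a grid-consistent assignment and hence a homomorphism into $B$, while the backward direction decodes any homomorphism $(S,X)\to(B,X)$ into a clique by reading off the vertices selected along the horizontal/vertical paths and using their pairwise edge-consistency. The running-time bound (4) is routine: the grid minor and branch sets depend only on $(S,X)$ and $k$, and the blow-up over $V(H)$ is polynomial in $|H|$ for fixed $k$ and $|(S,X)|$, giving the $f(k,|(S,X)|)\cdot|H|^{O(1)}$ bound.

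The main obstacle I anticipate is faithfully threading the distinguished set $X$ and the constants through Grohe's construction without breaking any of the four conditions simultaneously. In the original result there are no free variables or constants, so the grid lives on the entire core; here the grid minor may interact with vertices of $X$, and homomorphisms are required to fix $X$ pointwise. The delicate point is ensuring that the clique-encoding happens entirely within $\vars{C}\setminus X$ (so that the grid, which by definition of $\tw{C,X}=\tw{G(C,X)}$ ignores $X$, still carries the full ${k\choose 2}$-sized grid) while condition (1) forces the $X$-only triples to survive verbatim in $B$. I would handle this by keeping the $X$-supported triples as a fixed, untouched ``skeleton'' shared by $S$ and $B$, and performing the $V(H)$-blow-up only on the core's non-distinguished variables, arguing that any homomorphism fixing $X$ automatically respects this split. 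The footnote in the excerpt already asserts that the Dalmau--Kolaitis--Vardi machinery generalises straightforwardly to structures with distinguished elements, so I would lean on that to justify that working with $(C,X)$ and its grid minor, rather than with an unadorned structure, incurs no essential new difficulty beyond this bookkeeping.
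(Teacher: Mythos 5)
Your proposal matches the paper's proof: both take the core $(C,X)$, apply the Excluded Grid Theorem to a high-treewidth connected component of the Gaifman graph (which by definition already excludes $X$), blow up the non-distinguished core variables into $H$-indexed copies subject to Grohe's row/column consistency conditions, keep the $X$-only triples verbatim to secure condition (1), obtain condition (2) by projecting each copy back to its core variable, and decode condition (3) from grid consistency. The one step you gloss over is the standard normalization in the backward direction of (3) --- composing a homomorphism $(C,X)\to(B,X)$ with the projection yields an endomorphism of the core, hence an automorphism, which is what licenses ``reading off'' a clique from the fibres --- but this is part of the Grohe machinery you explicitly invoke, so the approach is essentially identical.
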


The second ingredient is the following basic property of wdPFs of large domination width. Intuitively, 
it states that every wdPF $\F$ of large domination width contains a subtree $\T$ with an associated generalised t-graph $(S,X)\in \GtG{\T}$ of large $\ctw{S,X}$,  
satisfying a particular minimality condition. 

\begin{lemma}
\label{lemma:large-dw}
Let $k\geq 2$ and $\F$ be a wdPF such that $\dw{\F}\geq k$. 
Then there exists a subtree $\T$ of $\F$, and $(S,\vars{\T})\in\GtG{\T}$ such that 
\begin{enumerate}
\item $\ctw{S,\vars{\T}}\geq k$, and 
\item whenever $(S',\vars{\T})$ $\rightarrow$ $(S,\vars{\T})$ holds, then $(S,\vars{\T})$ $\rightarrow$ $(S',\vars{\T})$ also holds, for every $(S',\vars{\T})\in\GtG{\T}$. 
\end{enumerate}
\end{lemma}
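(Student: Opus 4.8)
The plan is to unfold the definition of domination width to extract a witnessing subtree, and then, inside its associated set of generalised t-graphs, to isolate an element that is simultaneously of large $\text{ctw}$ and minimal under homomorphisms.

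First I would extract the subtree. Since $k\geq 2$, the integer $k-1$ is positive, and $\dw{\F}\geq k$ means that $k-1$ lies strictly below the minimum positive integer witnessing domination. By the definition of $\dw{\cdot}$, it is therefore \emph{not} the case that $\GtG{\T}$ is $(k-1)$-dominated for every subtree $\T$ of $\F$; I fix a subtree $\T$ for which it fails and write $X:=\vars{\T}$. Unfolding $(k-1)$-domination, the set $\{(S',X)\in\GtG{\T}\mid \ctw{S',X}\leq k-1\}$ is not a dominating set of $\GtG{\T}$, so there exists a ``witness'' $(S_0,X)\in\GtG{\T}$ with $\ctw{S_0,X}\geq k$ such that no $(S',X)\in\GtG{\T}$ of $\text{ctw}$ at most $k-1$ satisfies $(S',X)\rightarrow (S_0,X)$.

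Next I would produce condition (2) by a minimisation argument, carried out \emph{below} the witness rather than over all of $\GtG{\T}$. Using that $\CA{\T}$ is finite, and hence so is $\GtG{\T}$, I consider the nonempty finite set $\mathcal{D}:=\{(S',X)\in\GtG{\T}\mid (S',X)\rightarrow (S_0,X)\}$ (it contains $(S_0,X)$) and pick an element $(S,X)\in\mathcal{D}$ that is $\rightarrow$-minimal inside $\mathcal{D}$. Condition (2) then follows from transitivity of $\rightarrow$: any $(S',X)\in\GtG{\T}$ with $(S',X)\rightarrow (S,X)$ also satisfies $(S',X)\rightarrow (S_0,X)$, hence lies in $\mathcal{D}$, so minimality of $(S,X)$ forces $(S,X)\rightarrow (S',X)$.

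Finally I would verify condition (1), which is the point where the non-domination of the witness is essential. Since $(S,X)\in\GtG{\T}$ and $(S,X)\rightarrow (S_0,X)$, the assumption $\ctw{S,X}\leq k-1$ would exhibit a small-$\text{ctw}$ element of $\GtG{\T}$ mapping into $(S_0,X)$, contradicting the choice of $(S_0,X)$; hence $\ctw{S,X}\geq k$. The only real subtlety---and the main obstacle to a one-line argument---is precisely this last point: a naive ``take any $\rightarrow$-minimal element of $\GtG{\T}$'' could land on an element of small $\text{ctw}$, so the minimisation must be localised to $\mathcal{D}$ and combined with the non-domination property of $(S_0,X)$ to guarantee that largeness of $\text{ctw}$ survives.
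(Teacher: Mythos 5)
Your proof is correct and takes essentially the same approach as the paper's: both unfold the failure of $(k-1)$-domination to obtain a subtree $\T$ together with an undominated witness of $\text{ctw}\geq k$, and then extract a $\rightarrow$-minimal element from a finite, $\rightarrow$-downward-closed subset of $\GtG{\T}$, using transitivity for condition (2) and the non-domination of the witness for condition (1). The only (cosmetic) difference is the minimisation device: you take a minimal element of the hom-down-set $\mathcal{D}$ of a single witness $(S_0,X)$, whereas the paper takes an element of a minimal strongly connected component of the homomorphism digraph on the set of all undominated large elements.
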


\begin{proof}
Suppose that $\dw{\F}\geq k$, i.e., $\dw{\F}\leq k-1$ does not hold. 
By definition of domination width, there is a subtree $\T$ of $\F$ such that $\GtG{\T}$ is not 
$(k-1)$-dominated. In particular, the following subset $\G\subseteq \GtG{\T}$ is non-empty: 
$(R,\vars{\T})\in\G$ if and only if  $(R,\vars{\T})\in \GtG{\T}$, $\ctw{R,\vars{\T}}\geq k$, and $(R',\vars{\T})$ $\not\rightarrow$ $(R,\vars{\T})$, 
for all $(R',\vars{\T})\in\GtG{\T}$ with $\ctw{R',\vars{\T}}$ $\leq$ $k-1$. 
Consider the directed graph $H$ with vertex set $\G$ and the existence of homomorphism relation $\rightarrow$ as the edge relation. 
Let $C$ be a minimal strongly connected component of $H$ and pick any $(S,\vars{\T})\in C$. 
We claim that $(S,\vars{\T})$ satisfies the required conditions.  
Indeed, suppose that $(S',\vars{\T})\rightarrow (S,\vars{\T})$. Since $(S,\vars{\T})\in \G$, and by construction of $\G$, 
it must be the case that 
$(S',\vars{\T})$ $\in$ $\G$. Since $C$ is minimal, $(S',\vars{\T})\in C$, and then there is 
a directed path from $(S,\vars{\T})$ to $(S',\vars{\T})$ in $H$. By transitivity of the relation $\rightarrow$, 
we have $(S,\vars{\T})\rightarrow (S',\vars{\T})$ as required. 
\end{proof}

\medskip
\noindent
{\bf The reduction.} We now present an fpt-reduction from $p$-CLIQUE to $p$-co-$\wdeval(\C)$. 
Let $k\geq 2$ and $H$ be an undirected graph. We start by enumerating the class $\C$ until we find some $P\in\C$ such that 
$\dw{P}\geq {\tt w}({k\choose 2})$. Since $\C$ has unbounded domination width, this is always possible. 
Since the domination width is computable, we can find $P$ in time $\alpha(k)$, for a computable function $\alpha:\mathbb{N}\rightarrow \mathbb{N}$. 
Let $\F:=\wdpf(P)$. Since $\dw{\F}\geq {\tt w}({k\choose 2})$, we can apply Lemma \ref{lemma:large-dw} to obtain a subtree $\T$ 
of $\F$ and $(S,\vars{\T})\in\GtG{\T}$ satisfying the conditions of the lemma. By condition (1), $\ctw{S,\vars{\T}}$ $\geq$ ${\tt w}({k\choose 2})$ and hence, 
by Lemma \ref{lemma:groheB}, we can compute in time $f(k,|(S,\vars{\T})|)\cdot |H|^{O(1)}$ a generalised t-graph $(B,\vars{\T})$ satisfying the conditions in the lemma. 
Observe that $(S,\vars{\T})$ only depends on $k$ and thus, $(B,\vars{\T})$ can be computed in time $g(k)\cdot |H|^{O(1)}$, for some computable function $g$. 

Now we define an RDF graph $G$ and a mapping $\mu$ with $\dom{\mu}=\vars{\T}$.  
The idea is that $G$ is precisely $B$ but interpreted as an RDF graph, i.e., we freeze the variables of $B$, which now become IRIs, and 
$\mu$ is the identity mapping over $\vars{\T}$, modulo freezing of variables in $B$ (note that $\vars{\T}\subseteq \vars{B}$). 
Formally, for $?x\in \vars{B}$, we define $a_{?x}$ to be an IRI. We define $\Psi:\vars{B}\rightarrow \bI$ to be the mapping that maps each $?x\in\vars{B}$ to $a_{?x}$. 
Let $G$ be the RDF graph defined by the set $G:=\{\Psi(t)\mid t\in B\}$ 
and let $\mu$ be the mapping with $\dom{\mu}=\vars{\T}$ such that $\mu(?x)=\Psi(?x)$, for every $?x\in \vars{\T}$. 
By construction, $\Psi$ is a homomorphism from $B$ to $G$ and $(B,\vars{\T})\rightarrow^\mu G$. 
We also define a function $\Theta:\dom{G}\to \bI\cup \bV$, where $\dom{G}\subseteq \bI$ is the set of IRIs appearing in $G$, 
such that $\Theta(a)=?x$ if $a=a_{?x}$ and $\Theta(a)=a$ otherwise. 

Observe that $|P|\leq \alpha(k)$ and that $(P,G,\mu)$ can be computed in fpt-time from $(H,k)$, that is, 
in time $g'(k)\cdot |H|^{O(1)}$ for some computable function $g'$. It remains to show that our reduction is correct, that is, 
$H$ contains a clique of size $k$ if and only if $\mu\not\in \sem{P}{G}=\sem{\F}{G}$. 

\medskip
\noindent
{\bf Correctness of the reduction.} Suppose first that $H$ contains a clique of size $k$. Assume $\F=\{\T_1,\dots,\T_m\}$ and $(S,\vars{\T})$ $=$ $(S_\Delta,\vars{\T})$, 
for some $\Delta\in\VCA{\T}$. Let $\T'_\ell$ be a subtree of $\T_\ell$, with $\ell\in\{1,\dots,m\}$, such that 
$\mu$ is a homomorphism from $\pat{\T'_\ell}$ to $G$. We claim that 
there is a child $n$ of $\T'_\ell$ such that 
$$(\pat{\T'_\ell}\cup\pat{n},\vars{\T'_\ell})\rightarrow^\mu G.$$ 
Note that this implies that $\mu\not\in\sem{\T_\ell}{G}$, and since $\ell$ is arbitrary, it follows that $\mu\not\in\sem{\F}{G}$ as required. 
We prove first that $\ell\in\dom{\Delta}$. Note that $\Theta\circ\mu$ is a homomorphism from $\pat{\T'_\ell}$ to $B$. 
By definition of $\mu$, we have that $(\pat{\T'_\ell},\vars{\T})\rightarrow (B,\vars{\T})$. 
By item (2) in Lemma \ref{lemma:groheB}, it follows that $(\pat{\T'_\ell},\vars{\T})$ $\rightarrow$ $(S_\Delta,\vars{\T})$. 
Since $\T'_\ell=\T^\tsupp(\ell)$ and $\Delta$ is valid, it must be the case that $\ell\in \dom{\Delta}$. 

Recall that $S_\Delta=\pat{\T}\cup\bigcup_{i\in\dom{\Delta}}\rho_\Delta(i)$, where $\rho_\Delta(i)$ 
is obtained from $\pat{\Delta(i)}$ by renaming the variables in $\vars{\Delta(i)}\setminus \vars{\T}$ to fresh variables. 
Since $H$ contains a clique of size $k$, we obtain from Lemma \ref{lemma:groheB}, item (3) that 
$(S_\Delta,\vars{\T})\rightarrow (B,\vars{\T})$. Since $(B,\vars{\T})\rightarrow^\mu G$, we have that $(S_\Delta,\vars{\T})\rightarrow^\mu G$. 
In particular, there is a homomorphism $\nu$ from $\rho_\Delta(\ell)$ to $G$ compatible with $\mu$. 
It follows that there is a homomorphism $\nu'$ from $\pat{\Delta(\ell)}$ to $G$ compatible with $\mu$. 
By considering $\mu\cup\nu'$, we have that 
$$(\pat{\T'_\ell}\cup\pat{\Delta(\ell)},\vars{\T})\rightarrow^\mu G.$$
As $\Delta(\ell)$ is a child of $\T^\tsupp(\ell)=\T'_\ell$, the claim follows. Thus $\mu\not\in\sem{\F}{G}$. 

Assume now that $\mu\not\in\sem{\F}{G}$. Let $I\subseteq\supp{\T}$ such that $i\in I$ if and only if 
$\mu$ is a homomorphism from $\pat{\T^\tsupp(i)}$ to $G$. 
We claim that $I\neq \emptyset$. Since $\T$ is a subtree of $\F$, it suffices to show that $\mu$ is a homomorphism from $\pat{\T}$ to $G$. 
To see this, let $t\in \pat{\T}$. In particular, $t\in S_\Delta$ and $\vars{t}\subseteq \vars{\T}$. We can invoke item (1) in Lemma \ref{lemma:groheB} 
and obtain that $t\in B$. By definition of $G$, $\Psi(t)\in G$, and since $\mu(t)=\Psi(t)$, it follows that $\mu(t)\in G$. Then  
$\mu$ is a homomorphism from $\pat{\T}$ to $G$ and $I\neq \emptyset$. 

Since $\mu\not\in\sem{\F}{G}$, 
for every $i\in I$, there exists a child $n_i$ of $\pat{\T^\tsupp(i)}$ and a homomorphism $\nu_i$ from $\pat{n_i}$ to $G$ compatible with $\mu$. 
Let $\Delta'$ be the children assignment with $\dom{\Delta'}=I$ such that $\Delta'(i)=n_i$, for every $i\in I$. 
It follows that, for every $i\in I$, there is a homomorphism $\nu'_i$ from $\rho_{\Delta'}(i)$ to $G$ compatible with $\mu$. 
By definition of $S_{\Delta'}$, the mapping $h=\mu\cup\bigcup_{i\in I}\nu'_i$ is well-defined and is a homomorphism from $S_{\Delta'}$ to $G$. 
In particular,  $(S_{\Delta'},\vars{\T})\rightarrow^\mu G$. 
We now show that $\Delta'$ is valid. By contradiction, assume that there is $j\in\supp{\T}\setminus I$ such that 
$(\pat{\T^\tsupp(j)},\vars{\T})$ $\rightarrow$ $(S_{\Delta'},\vars{\T})$. 
Since $(S_{\Delta'},\vars{\T})\rightarrow^\mu G$, we have that 
$$(\pat{\T^\tsupp(j)},\vars{\T})\rightarrow^\mu G.$$ 
In particular, $\mu$ is a homomorphism from $\pat{\T^\tsupp(j)}$ to $G$, which contradicts the definition of $I$. 
Hence $\Delta'\in \VCA{\T}$ and consequently $(S_{\Delta'},\vars{\T})\in \GtG{\T}$. 

Observe that, by considering $\Theta\circ h$, $(S_{\Delta'},\vars{\T})$ $\rightarrow$ $(B,\vars{\T})$. 
By item (2) of Lemma \ref{lemma:groheB}, $(B,\vars{\T})$ $\rightarrow$ $(S_\Delta,\vars{\T})$, and hence, $(S_{\Delta'},\vars{\T})$ $\rightarrow$ $(S_\Delta,\vars{\T})$. 
Since $(S_{\Delta'},\vars{\T})$ $\in$ $\GtG{\T}$ and by item (2), Lemma \ref{lemma:large-dw}, we have that 
$(S_{\Delta},\vars{\T})$ $\rightarrow$ $(S_{\Delta'},\vars{\T})$, and then $(S_{\Delta},\vars{\T})$ $\rightarrow$ $(B,\vars{\T})$. 
We can apply item (3) of Lemma \ref{lemma:groheB} and conclude that 
$H$ contains a clique of size $k$ as required.

\section{Conclusions}
\label{sec:conclusions}

We have introduced the notion of domination width for well-designed graph patterns. 
We showed that patterns with bounded domination width can be evaluated in polynomial time (Theorem \ref{theo:main-tractab}). 
In a matching hardness result, we showed that classes of unbounded domination width cannot be evaluated in polynomial time (Theorem \ref{theo:main-hardness}), 
unless a widely believed assumption from parameterised complexity fails. 
This provides a complete complexity classification for the evaluation problem restricted to admissible classes of well-designed graph patterns (Theorem \ref{theo:main}). 

A possible direction for future work is to additionally consider the FILTER and SELECT operators (for a formal semantics of these operators, we refer the reader to \cite{PAG09,PS14}).  
We remark, however, that a complete characterisation of the tractable restrictions seems challenging in these cases. 
Indeed, observe that our classification of Theorem \ref{theo:main} is based on the following dichotomy: 
either co-$\wdeval(\C)$ is in PTIME or it is $\Wone$-hard. 
As we explain below, it is known that this dichotomy fails if we add FILTER or SELECT, in the sense that there is a class $\C$ of queries such that co-$\wdeval(\C)$ is in FPT but is NP-hard. 

For the case of FILTER, we note that well-designed patterns using the FILTER operator can express CQs with \emph{inequalities}. 
Consequently, for each class of undirected graphs $\HH$, it is possible to construct a class $\C_\HH$ of well-designed patterns using AND, OPT and FILTER such that 
co-$\wdeval(\C_\HH)$ is polynomial-time equivalent to the \emph{embedding} problem $\emb(\HH)$ for $\HH$. 
In $\emb(\HH)$, we are given two undirected graphs $H$ and $H'$, where $H\in \HH$, 
and the question is whether there is an embedding, i.e., an injective homomorphism from $H$ to $H'$. 
It is known, for instance, that $\emb(\PP)$ (and consequently co-$\wdeval(\C_\PP)$) is in FPT but is NP-hard, 
where $\PP$ is the class of all paths (see e.g. \cite[Section 8]{gro07} and \cite[Section 13.3]{FG06} for more details). 

For SELECT (or \emph{projection}), it was recently shown in \cite{KPS16} that the evaluation problem for the so-called classes of patterns using AND, OPT and SELECT of bounded \emph{global} treewidth 
and \emph{semi-bounded interface} is in FPT (see \cite[Theorem~5]{KPS16}) but NP-hard (as pointed out in \cite{KPS16}, NP-hardness already follows from results in \cite{BPS15}).  

While the above discussion suggests that obtaining a precise characterisation of the tractable classes  in the presence of FILTER or SELECT could be difficult, 
an interesting research direction would be to characterise the classes that are fixed-parameter tractable. 
In a recent unpublished manuscript \cite{MS17}, this problem was studied for (not necessarily well-designed) pattern trees with projection and 
several complexity classifications were obtained. 
Their work differs to ours in that they consider more expressive patterns and aim for fixed-parameter tractability while we consider simpler patterns but deal with polynomial-time tractability. 
Regarding the FILTER operator, let us remark that obtaining characterisations for fixed-parameter tractability in the presence of FILTER would require to solve a known open problem, 
namely, the corresponding characterisation for problems of the form $\emb(\HH)$ (for further details and recent results, see e.g. \cite{gro07,CGL17,L15}).

It would be also interesting to obtain similar structural characterisations for other variants of the evaluation problem such as the problem of \emph{counting} the number of solutions 
or \emph{enumerating} all solutions (see e.g. \cite{KPS16,PSamw14}); or for fragments beyond the well-designed one such as the class of \emph{weakly} well-designed queries \cite{KK16}.

Finally, note that, related to our results, we have the \emph{recognisition} problem: given a well-designed graph pattern $P$, decide whether $\dw{P}\leq k$ (we assume $k\geq 1$ to be fixed). 
Observe that Proposition \ref{prop:branch} gives us an NP upper bound for this problem in the case of UNION-free patterns (as checking bw $\leq k$ is in NP). 
Also, by using the fact that checking whether a relational structure has a core of treewidth at most $k$ is NP-complete \cite[Theorem~13]{DKV}, we obtain that the recognition problem 
for UNION-free patterns is actually NP-complete. 
For arbitrary well-designed graph patterns, it is possible to obtain a $\Pi^p_2$ upper bound from the definition of domination width. 
It remains an open question whether this $\Pi^p_2$ bound is tight.

\bibliographystyle{abbrv}
\bibliography{bibliography}

\section{Appendix}

\subsection{Proof of Lemma \ref{lemma:groheB}}

\noindent
{\textsc{Lemma 2.}}
\emph{
Let $k\geq 2$ and $H$ be an undirected graph. 
Let $(S,X)$ be a generalised t-graph with $\ctw{S,X}\geq {\tt w}({k\choose 2})$. 
Then there is a generalised t-graph $(B,X)$ such that
\begin{enumerate}
\item if $t\in S$ and $\vars{t}\subseteq X$, then $t\in B$. 
\item $(B,X)\rightarrow (S,X)$. 
\item $H$ contains a clique of size $k$ iff $(S,X)\rightarrow(B,X)$. 
\item $(B,X)$ can be computed in time $f(k,|(S,X)|)\cdot |H|^{O(1)}$, where $f:\mathbb{N}\times\mathbb{N}\rightarrow\mathbb{N}$ is a 
computable function. 
\end{enumerate}
}

\medskip

We devote this section to prove this lemma. Our proof is a simple modification of the main construction of \cite{gro07} to handle distinguished elements. 

We start with some definitions.  For $k,\ell\geq 1$, the \emph{$(k\times \ell)$-grid} is the undirected graph with vertex set $\{1,\dots,k\}\times\{1,\dots,\ell\}$ and an 
edge between $(i,j)$ and $(i',j')$ if $|i-i'|+|j-j'|=1$. It is a known fact that the $(k\times k)$-grid has treewidth $k$ (see e.g. \cite{diestel}). 
We say that an undirected graph $H=(V,E)$ is a \emph{minor} of $H'=(V',E')$ if there is a \emph{minor map} from $H$ to $H'$, that is, a function $\gamma$ mapping each vertex of $H$ to a a non-empty set of vertices in $H'$ such that 
(i) $\gamma(u)$ is connected for all $u\in V$, (ii) for all $u,v\in V$ with $u\neq v$, the sets $\gamma(u)$ and $\gamma(v)$ are disjoint, 
and (iii) for all edges $\{u,v\}\in E$, there is an edge $\{u',v'\}\in E'$ such that $u'\in\gamma(u)$ and $v'\in\gamma(v)$. 
We say that the minor map is \emph{onto} if $\bigcup_{u\in V}\gamma(u)=V'$. Observe that if there is a minor map from $H$ to $H'$, and $H'$ is connected, 
then there is a minor map of $H$ onto $H'$.

Let $k\geq 2$, $H=(V,E)$ be an undirected graph, and $(S,X)$ be a generalised t-graph with $\ctw{S,X}\geq {\tt w}({k\choose 2})$. 
From now on, we let $K:={k \choose 2}$. 
Let $(C,X)$ be the core of $(S,X)$ and $G(C,X)$ be the Gaifman graph of $(C,X)$. 
Suppose that $F_1,\dots,F_r$ are the connected components of  $G(C,X)$. As $\tw{G(C,X)}$ $\geq$ ${\tt w}(K)$, there is a
connected component $F_i$ with $\tw{F_i}\geq {\tt w}(K)$. Without loss of generality, we assume that $F_i=F_1$. 
By the Excluded Grid Theorem, since $\tw{F_1}\geq {\tt w}(K)$, it follows that the $(K\times K)$-grid is a minor of $F_1$, 
and hence $(k\times K)$-grid is a minor of $F_1$. Let $\gamma$ be a minor map from the $(k\times K)$-grid onto $F_1$. 

We fix a bijection $\rho$  between $\{1,\dots,K\}$ and all unordered pairs of elements of $\{1,\dots,k\}$. 
For $p\in\{1,\dots,K\}$, we shall abuse notation and write $p$ instead of $\rho(p)$ and $i\in p$ instead of $i\in \rho(p)$, for $i\in \{1,\dots,k\}$. 
We define the following set $\V\subseteq \bV$ of variables: $?(v,e,i,p,?a)\in \V$ iff $v\in V$, $e\in E$, $i\in\{1,\dots,k\}$, $p\in \{1,\dots,K\}$, $?a\in \gamma(i,p)$ and 
$v\in e$ $\iff$ $i\in p$.

 We denote by $V(F_1)\subseteq \bV$ the vertex set of $F_1$. 
Let $\Pi:(\V\cup \vars{C})\to V(F_1)$ be the mapping such that $\Pi(?(v,e,i,p,?a))=?a$, for all $?(v,e,i,p,?a)\in\V$, and $\Pi(?x)=?x$, for $?x\in \vars{C}$. 
We define 
$$Tr:=\{t\in (\bI\cup \bV)^3\mid \vars{t}\setminus X\subseteq \V \text{ and } \Pi(t)\in C\}.$$
We also define $Tr'\subseteq Tr$ as follows. For $t\in Tr$, we have $t\in Tr'$ iff  ($\dagger$) for
all $?x,?x'\in \vars{t}\setminus X \subseteq \V$. 
 \begin{enumerate}
 \item if $?x=?(v,e,i,p,?a)$ and $?x'=?(v',e',i,p',?a')$, then $v=v'$, and 
 \item if $?x=?(v,e,i,p,?a)$ and $?x'=?(v',e',i',p,?a')$, then $e=e'$. 
 \end{enumerate}
 
Let also $Tr_0:=\{t\in C\mid \vars{t}\setminus X\not\subseteq V(F_1)\}$. 
Then we define $B=Tr'\cup Tr_0$. 

Note that $\Pi$ is a homomorphism from $B$ to $C$. Indeed, if $t\in Tr'$, then $t\in Tr$, and hence $\Pi(t)\in C$. 
If $t\in Tr_0$, then $\Pi(t)=t\in C$. Since $\Pi(?x)=?x$, for all $?x\in X$, we have $(B,X)\rightarrow (C,X)$. Since $(C,X)\rightarrow (S,X)$, 
we have $(B,X)\rightarrow (S,X)$ and condition (2) in the lemma holds.
For condition (1), let $t\in S$ such that $\vars{t}\subseteq X$. Since $(S,X)\rightarrow (C,X)$, then $t\in C$. In particular,  $\vars{t}\setminus X=\emptyset \subseteq \V$ and, 
 by definition of $\Pi$, we have $\Pi(t)=t\in C$. Hence $t\in Tr$. Since  $\vars{t}\setminus X=\emptyset$, $t\in Tr'$ holds trivially. 
Then $t\in B$ as required. Condition (4) is immediate from the definition of $B$. It remains to verify condition (3). 
We can follow the same arguments as in \cite{gro07}.

Suppose that $H$ contains a clique of size $k$ and let $\{v_1,\dots,v_k\}$ be such a clique. For $p\in K$ with $\rho(p)=\{i,j\}$, where $i,j\in\{1,\dots,k\}$ and $i\neq j$, we let 
$e_p$ be the  edge from $v_i$ to $v_j$. In this case we can define $h:\vars{C}\to \vars{B}$ such that $h(?x)=?x$, if $?x\not\in V(F_1)$, 
and $h(?a)=?(v_i,e_p,i,p,?a)$ if $?a\in V(F_1)$, for $i\in \{1,\dots,k\}$ and $p\in \{1,\dots,K\}$ with $\gamma(i,p)=?a$. 
First note that $v_i\in e_p$ $\iff$ $i\in p$. Note that since $v_i$ and $e_p$ are determined by $i$ and $p$, then for every $t\in C$, it is the case that $h(t)\in B$. 
As  $h(?x)=?x$, if $?x\in X$, we conclude that $(C,X)\rightarrow (B,X)$. Since $(S,X)\rightarrow (C,X)$, then $(S,X)\rightarrow (B,X)$ as required.

Conversely, suppose that $(S,X)\rightarrow (B,X)$. In particular, $(C,X)\rightarrow (B,X)$ via a homomorphism $h$.
We claim that there is a homomorphism $g$ from $C$ to $B$ with $g(?x)=?x$, for all $?x\in X$ such that $\Pi\circ g$ is the identity mapping over $\vars{C}$. Indeed, 
let $s=\Pi\circ h$. Then $s$ is a homomorphism witnessing $(C,X)\rightarrow (C,X)$. Since $(C,X)$ is a core, then $s$ must be a bijection and hence an isomorphism. 
It suffices to conisder $g=h\circ s^{-1}$. 
It follows that for all $i\in \{1,\dots,k\}$, $p\in \{1,\dots,K\}$, $?a$ such that $\gamma(i,p)=?a$, $g(?a)$ is of the form 
$$g(?a)=?(v_{a?},e_{?a},i,p,?a)$$
where $v_{?a}\in e_{?a}$ $\iff$ $i\in p$. By the consistency conditions ($\dagger$) and the connectivity of $F_1$, it follows that (i) $v_{?a}=v_{?a'}$ and $e_{?a}=e_{?a'}$, 
whenever $?a,?a'\in \gamma(i,p)$, (ii) $v_{?a}=v_{?a'}$, if $?a\in \gamma(i,p)$ and $?a'\in \gamma(i,p')$, and (iii) $e_{?a}=e_{?a'}$, if $?a\in \gamma(i,p)$ and $?a'\in \gamma(i',p)$. 
It follows that there are vertices $v_1,\dots,v_k\in V$ and edges $e_1,\dots,e_K$ such that whenever $?a\in \gamma(i,p)$ then $g(?a)=?(v_{i},e_{p},i,p,?a)$. 
By the conditions $v_i\in e_i$ $\iff$ $i\in p$, we have that $\{v_1,\dots,v_k\}$ is a clique in $H$ as required.

\end{document}